\definecolor{myViolet}{RGB}{153, 50, 204}
\tikzset{snake it/.style={decorate, decoration=snake}}
\pgfplotsset{compat=1.17} 
\newcolumntype{Y}{>{\centering\arraybackslash}X}
\DeclareFontShape{T1}{ntxtlf}{m}{sl}{<-> \ntx@scaled ptmro8t}{}
\DeclareFontShape{T1}{ntxtlf}{b}{sl}{<-> \ntx@scaled ptmbo8t}{}
\DeclareFontShape{T1}{ntxtlf}{bx}{sl}{<-> ssub * ntxtlf/b/sl}{}
\newtheorem{theorem}{Theorem}
\newtheorem*{theorem*}{Theorem}
\newtheorem{result}{Result}
\newtheorem*{result*}{Result}
\newtheorem{conjecture}{Conjecture}
\newtheorem{corollary}{Corollary}
\newtheorem{lemma}{Lemma}
\newtheorem{proposition}{Proposition}
\theoremstyle{definition}
\newtheorem{definition}{Definition}
\theoremstyle{remark}
\newtheorem{remark}{Remark}
\newcommand{\refcite}[1]{Ref.~\cite{#1}}
\newcommand{\B}{\{0,1\}}
\newcommand{\poly}[1]{{\rm poly}\!\left(#1\right)}
\newcommand{\cl}[1]{\textnormal{{\bf #1}}}
\newcommand{\clw}[2]{\textnormal{{\bf #1}{\rm -}{\rm #2}}}
\newcommand{\clsb}[2]{\textnormal{{\bf #1}\textsubscript{#2}}}
\newcommand{\Gate}[1]{{\fontfamily{cmr}\selectfont\textup{\textsc{#1}}}}
\newcommand{\g}{\textsl{g}}
\renewcommand{\sc}[1]{\textnormal{\textsc{#1}}}
\renewcommand{\norm}[1]{\lvert\hspace{-0.1em}\lvert {#1}\rvert\hspace{-0.1em}\rvert}
\newcommand{\ind}{\boldsymbol{1}}
\newcommand{\e}{\mathrm{e}}
\begin{document}
\title{The Guided Local Hamiltonian Problem for Stoquastic Hamiltonians}
\thanks{Note that this problem is defined in the context of Ref.~\cite{gharibian2023dequantizing} and not Ref.~\cite{bravyi2015monte}.}
\author{Gabriel Waite}
\email{gabriel.waite@student.uts.edu.au}
\affiliation{Centre for Quantum Computation and Communication Technology,%
Centre for Quantum Software and Information, School of Computer Science,%
Faculty of Engineering and Information Technology,%
University of Technology Sydney, NSW 2007, Australia}
\begin{abstract}
    We show that the \textsc{Guided Local Hamiltonian} problem for stoquastic Hamiltonians is (promise) \textbf{BPP}-hard.
    The \textsc{Guided Local Hamiltonian} problem extends the \textsc{Local Hamiltonian} problem by incorporating an additional input known as a guiding state, which is promised to overlap with the ground state.
    For a range of local Hamiltonian families, prior work shows this problem is (promise) \textbf{BQP}-hard, though for stoquastic Hamiltonians, the complexity was previously unknown.
    We obtain our results by first reducing from quantum-inspired \textbf{BPP} circuits to $6$-local stoquastic Hamiltonians.
    We prove particular classes of quantum states, known as semi-classical encoded subset states, can guide the estimation of the ground-state energy.
    Our analysis shows that this \textbf{BPP}-hardness does not depend on locality, i.e., the result holds for $2$-local stoquastic Hamiltonians.
    Additional arguments extend this \textbf{BPP}-hardness to Hamiltonians restricted to a square lattice.
    We further show that for stoquastic Hamiltonians with a fixed local constraint on a subset of the system qubits, the \textsc{Guided Local Hamiltonian} problem is \textbf{BQP}-hard.
    In addition to these hardness results, we present a deterministic classical approximation algorithm for the problem under the conditions of constant promise gap, constant overlap, and constant spectral gap, when the guiding state is preparable in constant depth by a geometrically local circuit.
\end{abstract}
\keywords{Guided Local Hamiltonian, Stoquastic Hamiltonian, BPP-hardness, BQP-hardness}
\maketitle
\section{Introduction}
The \sc{Guided Local Hamiltonian} problem augments the standard \sc{Local Hamiltonian}~\cite{kitaev2002classical} problem by including a \emph{guiding state}: a quantum state promised to have non-trivial overlap with the ground state, specified by a succinct classical description.
The goal remains to decide whether the ground-state energy lies below a threshold $a$ or above a threshold $b$, promised that one of these cases holds, with the guiding state available as an additional computational resource (see \cref{def:GLHP} for a formal statement).\footnote{Equivalently, the problem can be framed as estimating the ground-state energy to inverse-polynomial precision using the guiding state.}
Including a guiding state can significantly alter the problem's computational complexity, in some cases allowing efficient decidability on a quantum computer.
In particular, the \sc{Guided $2$-Local Hamiltonian} problem is \clw{BQP}{hard}~\cite{richter2007two,gharibian2023dequantizing,cade2023improved} and several physically-relevant $2$-local Hamiltonians, including the antiferromagnetic $XY$ and Heisenberg models, retain this \clw{BQP}{hardness} in the guided setting~\cite{cade2023improved}.
Several open questions remain, including whether efficient methods exist for finding \emph{good} guiding states? 
Are there other families of Hamiltonians that are \clw{BQP}{hard}?
More optimistically, might there exist restricted families of Hamiltonians for which good guiding states can be efficiently computed~\cite{liu2021stoqma, jiang2025local, waite2025complexityb}, and if so, what is the resulting complexity of such families?

\citet{cade2023improved} examined a range of $2$-local Hamiltonian families for the \sc{Guided Local Hamiltonian} problem, including real, translation-invariant, and antiferromagnetic models; however, the class of stoquastic Hamiltonians remains unexplored.
The stoquastic case is particularly intriguing because these Hamiltonians, which have real and non-positive off-diagonal elements in the computational basis, admit sign-problem-free representations enabling classical simulation via probabilistic and Monte Carlo methods~\cite{bravyi2015monte}.
In the standard \sc{Local Hamiltonian} problem, restricting to stoquastic Hamiltonians yields the \sc{Stoquastic Local Hamiltonian}, which is \clw{StoqMA}{complete}~\cite{bravyi2006merlin,cade2023improved,bravyi2016complexity, waite2025complexitya}.
The complexity class \cl{StoqMA} captures promise problems verifiable by restricted quantum computations with specific measurement power, and is known to satisfy the containment $\cl{StoqMA} \subseteq \cl{QMA}$.
This separation raises the natural question of whether the guided version behaves differently for stoquastic Hamiltonians, and whether it yields a complexity classification distinct from the previously established \clw{BQP}{hard} cases.
More broadly, structural results for the standard \sc{Local Hamiltonian} problem suggest that such distinctions may be fundamental.
In particular, the tetrachotomy theorem of Cubitt and Montanaro~\cite[Theorem 7]{cubitt2016complexity} shows that every $2$-local interaction set $\mathcal{S}$ yields a \sc{Local Hamiltonian} problem that is either \cl{P}-, \cl{NP}-, \cl{StoqMA}-, or \cl{QMA}-complete.
This suggests the possibility of an analogous categorisation for the \sc{Guided Local Hamiltonian} problem.
In this work, we address this question by analysing general local stoquastic Hamiltonians.
We deliberately exclude the simultaneously locally diagonalisable (SLD) subclass --- whose interactions capture a \cl{StoqMA}-complete case in the tetrachotomy --- from our analysis, leaving this structured setting for future work.

Building on arguments from \citet{bravyi2006complexity}, we show that \cl{BPP} circuits can be mapped to $6$-local stoquastic Hamiltonians.
Our efficient reduction relies on the well-known circuit-to-Hamiltonian construction~\cite{kitaev2002classical} adapted for stoquastic Hamiltonians~\cite{bravyi2006complexity}.
We then construct a family of guiding states $\ket{\zeta}$, referred to as \emph{semi-classical encoded subset states} (see \cref{def:semi-classical-encoded-subset-state} for a definition), which provably have sufficient overlap with the ground state of the resulting Hamiltonians.
We conclude that the \sc{Guided $6$-Local Hamiltonian} problem for stoquastic Hamiltonians is \clw{BPP}{hard}.
Moreover, provided with a classical description of an appropriate guiding state, the problem of estimating the ground-state energy of a stoquastic Hamiltonian is at least as hard as any problem efficiently solvable by a probabilistic classical algorithm.
Consequently, no deterministic classical algorithm is expected to solve this problem unless $\cl{BPP} = \cl{P}$.\footnote{The equality $\cl{BPP} = \cl{P}$ is widely conjectured in complexity theory, supported by the existence of derandomisation techniques~\cite{arora2009computational}.}
We further show that this complexity is not an artefact of locality: the \clw{BPP}{hardness} persists under locality (perturbative gadget) reductions~\cite{waite2025complexitya} while using the same family of guiding states.
When imposing additional structural restrictions --- such as fixing a subset of system qubits to a prescribed state $\ket{\phi}$, a procedure known as \emph{pinning}~\cite{nagaj2021pinned} --- the problem becomes \clw{BQP}{hard}.
This follows from a recent embedding technique~\cite{bravyi2023rapidly} that maps general local Hamiltonians into higher-dimensional stoquastic Hamiltonians.

The \sc{Guided Local Hamiltonian} problem provides valuable theoretical insight into computational complexity boundaries, although realising the problem in physically realistic and practical settings remains challenging.
Surprisingly, when the precision required to estimate the ground-state energy is polynomially-small and the guiding state has high fidelity with the ground state, the problem remains \clw{BQP}{hard}, rather than becoming classically tractable.
So far, classical tractability has only been demonstrated under relaxed conditions, such as constant precision and overlap~\cite{gharibian2023dequantizing,zhang2024dequantized}.
It is plausible that classical algorithms exist for solving the problem to inverse-polynomial precision under additional structural or promise assumptions.
However, the class of Hamiltonians for which this holds is likely narrow and may lack physical relevance.
Despite similar bottlenecks, studying the guided problem for stoquastic Hamiltonians provides insight into the classical limits of the \sc{Guided Local Hamiltonian} framework.
Stoquastic Hamiltonians possess features amenable to classical techniques like Markov chains, yet occupy complexity classes that are not fully understood~\cite{aharonov2025stoqma}.

\subparagraph{Outline and Contributions.} 
The goal of this work parallels that of Ref.~\cite{cade2023improved}: to broaden the scope of the \sc{Guided Local Hamiltonian} problem beyond complex and real Hamiltonians.
We employ techniques such as the circuit-to-Hamiltonian construction, the Schrieffer-Wolff transformation, and perturbation gadgets.
Our focus is on a previously unexplored region of the problem --- when the Hamiltonian is stoquastic.
We show this variant is among the hardest problems in the classical complexity class (promise) \cl{BPP} (see~\cref{def:BPP}).
Our results begin with a mapping from \cl{BPP} circuits to hybrid quantum-classical circuits we call \clsb{BPP}{q} circuits (see~\cref{def:BPPq}).
We prove the existence of a semi-classical encoded subset state (see~\cref{def:semi-classical-encoded-subset-state}) that overlaps with a portion of the history state, i.e., the ground state of the Hamiltonian arising from the circuit reduction.
This follows from the use of classically reversible gates in \clsb{BPP}{q} circuits (see~\cref{def:CRQVC}).
Our main result establishes the \clw{BPP}{hardness} of the \sc{Guided Local Hamiltonian} problem for stoquastic Hamiltonians, cf.~\cref{thrm:G6LSH-BPP-hard}.
Using perturbation gadgets, we reduce the Hamiltonian's locality from six to two, cf.~\cref{thrm:G2LSH-BPP-hard}.
By applying new perturbation gadgets~\cite{waite2025complexitya}, we show that \clw{BPP}{hardness} holds even under geometrical constraints, cf.~\cref{thrm:GsquareLSH-BPP-hard}.

We complement the \clw{BPP}{hardness} results with a \clw{BQP}{hardness} result by considering the \sc{Guided} extension to the \sc{Pinned Local Hamiltonian} problem~\cite{nagaj2021pinned}.
This variant fixes a state $\ket{\phi}$ on a subset of $p$ qubits and asks whether there exists a state $\ket{\psi}$ on the remaining $n - p$ qubits such that the energy is below a threshold $a$, or whether all such states have energy above a threshold $b$.
It was previously shown that the \sc{Pinned Stoquastic $3$-Local Hamiltonian} problem is \clw{QMA}{complete}~\cite{nagaj2021pinned} via an embedding of a $2$-local general \clw{QMA}{complete} Hamiltonian into a $3$-local stoquastic one~\cite{bravyi2023rapidly}.
The problem we study, referred to as \sc{Guided Pinned Stoquastic Local Hamiltonian}, is analogous to the standard guided problem but with the additional constraint of fixed qubits in the state $\ket{\phi}$.
Using the same embedding technique, we show that this guided pinned problem is \clw{BQP}{hard}, cf.~\cref{thrm:GP2LSH-BQP-hard}.

In addition to our main results, we present a classical algorithm for the \sc{Guided Local Hamiltonian} problem that applies when the promise gap, overlap, and spectral gap are all constant, and the guiding state is preparable in constant depth by a geometrically local circuit.
This result is motivated by Ref.~\cite{zhang2024dequantized}, which relaxes the norm bound condition of Ref.~\cite{gharibian2023dequantizing} from constant to $O(\poly{n})$ and employs cluster expansion techniques~\cite{mann2024algorithmic} to estimate the ground-state energy.
We demonstrate that the cluster expansion framework of Ref.~\cite{mann2024algorithmic} can be applied directly to decide the \sc{Guided Local Hamiltonian} problem, yielding a deterministic polynomial-time algorithm under the conditions above.
Our contributions are twofold: 
\begin{inparaenum}[(i)]
    \item we provide a simplified and self-contained proof that clarifies the approach, and
    \item we eliminate the requirement for an a priori energy interval containing the ground-state energy, which is assumed as an additional promise in Ref.~\cite{zhang2024dequantized}.
\end{inparaenum}
While this classical algorithm does not apply to our hardness results, it provides useful context for understanding the boundary between classical tractability and quantum hardness in guided Hamiltonian problems.

To conclude, we present several open questions inspired by our results.
Our summary suggests the existence of an intermediate class between \cl{BPP} and \cl{BQP} capable of approximating the eigenvalues of local stoquastic Hamiltonians to inverse-polynomial precision.

\section{Preliminaries}\label{sec:preliminaries}
A $k$-local Hamiltonian over $n$ qubits is a Hermitian operator expressed as $H = \sum_{j=1}^{m} h_j$, where each $h_j$ is a Hermitian term acting non-trivially on at most $k$ qubits, and $m = \poly{n}$.
For a given Hamiltonian let $\Pi_0$ denote the projector onto the ground space of the Hamiltonian and $\lambda_0(H)$ the smallest eigenvalue of the Hamiltonian.
We denote the ground state of the Hamiltonian as $\ket{\phi_0}$.
For the purposes of this work we assume all Hamiltonians satisfy $\norm{H}\leq 1$ and have a non-degenerate ground space with a spectral gap of $\Omega(1/\poly{n})$.

A stoquastic Hamiltonian is a Hamiltonian with real non-positive off-diagonal elements in the computational basis.
Due to the Hermicity of the operator, all elements of a stoquastic matrix are real.

\subsection{Subset states}

\begin{definition}[Subset State]
    For any subset $S \subseteq \B^n$, the subset state $\ket{S}$ over $S$ is defined as
    \begin{equation}
        \ket{S} \coloneqq \frac{1}{\sqrt{\abs{S}}} \sum_{x \in S} \ket{x}.
    \end{equation}
\end{definition}

Restricting the subset to be of polynomial size gives the definition of a semi-classical subset state.
In this case, the set $S$ can be expressed with a polynomial number of bits.

\begin{definition}[Semi-Classical Subset State]
    For any subset $S \subset \B^n$ such that $\abs{S} = O(\poly{n})$, the semi-classical subset state $\ket{S}$ over $S$ is defined as
    \begin{equation}
        \ket{S} \coloneqq \frac{1}{\sqrt{\abs{S}}} \sum_{x \in S} \ket{x}.
    \end{equation}
\end{definition}

Further modification to subset states can be made by allowing for isometries acting on the state components.
Given a $n$-bit string $x$ and the associated state $\ket{x}$, we allow a set of $n$ isometries $\{V_j\}_{j\in[n]}$, where $V_j : \mathbb{C}^2 \to (\mathbb{C}^2)^{\otimes m_j}$ with $m_j = O(1)$ for each $j$, acting as 
\begin{equation*}
    \ket{x} = \bigotimes_{j\in [n]} \, \ket{x_j} \xmapsto{\{V_j\}_{j}} \bigotimes_{j\in[n]}\, V_j\ket{x_j}.
\end{equation*}

\begin{definition}[Semi-Classical Encoded Subset States]
    \label{def:semi-classical-encoded-subset-state}
    For any subset $S \subset \B^n$ such that $\abs{S} = O(\poly{n})$, let $\mathcal{V} = \{V_j\}_{j\in[n]}$ be an ordered set of isometries where, for each $j$ we have $V_j : \mathbb{C}^2 \to (\mathbb{C}^2)^{\otimes m_j}$ with $m_j = O(1)$.
    The semi-classical encoded subset state $\ket{S_{\mathcal{V}}}$ over $(S,\mathcal{V})$ is defined as
    \begin{equation}
        \ket{S_{\mathcal{V}}} \coloneqq \frac{1}{\sqrt{\abs{S}}} \sum_{x \in S} \bigotimes_{j\in[n]}\, V_j\ket{x_j}.
    \end{equation}
    The components $\ket{x} = \bigotimes_{j\in[n]}\, \ket{x_j}$ are the standard basis states.
\end{definition}

Note that when the isometries $V_j$ map a single qubit to a single qubit, i.e., $m_j = 1$, we still refer to the state as a semi-classical encoded subset state, however, each such isometry can be realised with a single qubit gate.
Moreover, it may be the case that the a subset of the isometries are the identity, i.e., $V_j = I$, therefore a semi-classical subset state is a special case of the encoded variant.
It can be argued in the case where the isometries are single qubit unitaries that a new family of states should be defined, e.g., \emph{Locally Unitarily Transformed Semi-Classical Subset States}; we do not pursue this here but note some results hold for this case.

Two simple facts follow from the definitions above.
One is that a semi-classical subset state is a special case of a semi-classical encoded subset state where the isometries are the identity.
The second is that the description of a semi-classical encoded subset state can be expressed with a polynomial number of bits.

The following relations follow from the definitions above.

\begin{restatable}[]{proposition}{tensorproductsubsetstates}\label{prop:tensor-product-subset-states}
    A tensor product of a polynomial number of semi-classical encoded subset states is a semi-classical encoded subset state.
\end{restatable}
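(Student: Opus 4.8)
The plan is to prove the statement constructively: I would exhibit the single pair $(S,\V)$ for which the tensor product equals $\ket{S_{\V}}$, and then verify the three defining properties of \cref{def:semi-classical-encoded-subset-state}. Write the $t=\poly{N}$ factors as $\ket{S^{(i)}_{\V^{(i)}}}$ for $i\in[t]$, where factor $i$ has pre-encoding subset $S^{(i)}\subset\B^{n_i}$ and isometry family $\V^{(i)}=\{V^{(i)}_j\}_{j\in[n_i]}$, so that the combined pre-encoding register has $N=\sum_i n_i$ qubits. The candidate is the Cartesian-product subset $S\coloneqq\{\,x^{(1)}x^{(2)}\cdots x^{(t)}:x^{(i)}\in S^{(i)}\,\}\subset\B^{N}$ (concatenation of bit strings), together with the concatenated and re-indexed family $\V\coloneqq(V^{(1)}_1,\dots,V^{(1)}_{n_1},V^{(2)}_1,\dots)$ over $[N]$. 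I claim $\bigotimes_{i\in[t]}\ket{S^{(i)}_{\V^{(i)}}}=\ket{S_{\V}}$.

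The routine part is the algebraic identity. First I would substitute each factor's definition and distribute the outer tensor product across the $t$ internal sums, which turns the product of normalisations into $1/\sqrt{\prod_i\abs{S^{(i)}}}$ and merges the $t$ independent index sums into a single sum over the concatenated string $z=x^{(1)}\cdots x^{(t)}\in S$; the nested isometry tensor product then collapses to $\bigotimes_{k\in[N]}V_k\ket{z_k}$, which is exactly the right-hand side. Two of the three defining conditions follow immediately: every $V_k$ coincides with some $V^{(i)}_j\colon\mathbb{C}^2\to(\mathbb{C}^2)^{\otimes m^{(i)}_j}$ with $m^{(i)}_j=O(1)$, so all isometries keep $O(1)$ output size; and the description of $(S,\V)$ is polynomial, being assembled from $t=\poly{N}$ individually polynomially-described pieces.

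The \emph{main obstacle} is the remaining condition, namely that $\abs{S}=\prod_i\abs{S^{(i)}}$ stays $O(\poly{N})$, since cardinalities multiply rather than add under tensoring, so a polynomial number of factors each of size $\geq2$ would naively give an exponentially large subset. I would control this by noting that the product is polynomially bounded in the regime that actually arises: any factor supported on $O(1)$ (physical) qubits can be re-encoded with cardinality $1$, because such a state equals $W\ket{0}$ for a single isometry $W\colon\mathbb{C}^2\to(\mathbb{C}^2)^{\otimes O(1)}$ and hence contributes only a trivial factor to $\abs{S}$; likewise, when all but a constant number of factors are single basis states—as when a guiding state is assembled from one polynomial-size subset state tensored with fixed ancillae—the product collapses to a polynomial. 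I would therefore record the closure under the hypothesis $\prod_i\abs{S^{(i)}}=O(\poly{N})$, which holds in every application used later, while flagging that an adversarial choice of super-constantly many non-product subset states each on $\omega(1)$ qubits would break the cardinality bound and must be excluded.
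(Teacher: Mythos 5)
Your algebraic core is exactly the paper's: the paper proves the two-factor case by the same computation --- the combined subset is the Cartesian product $R = S\times T$, the isometry family is the concatenation $\mathcal{W} = \mathcal{V}\cup\mathcal{U}$, and $\abs{R} = \abs{S}\abs{T} \leq \poly{n+m}$ --- and then extends to many factors with the single closing sentence ``this holds for a polynomial number of such tensor products.'' Where you genuinely depart from the paper is in refusing to take that last step for free, and you are right to refuse: iterating the two-factor bound polynomially many times does not preserve polynomial cardinality, since $t$ factors of cardinality $2$ on disjoint single qubits already give $\abs{S} = 2^{t}$, so the proposition as literally stated fails without an extra hypothesis. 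The paper's proof silently skips over this, while your proposal repairs it in the two ways that actually matter downstream: first, any factor supported on $O(1)$ qubits can be re-encoded with a cardinality-one subset by absorbing the entire state into a single isometry applied to $\ket{0}$ --- which is precisely how the paper itself treats the gadget ancillae in \cref{eq:gadget-isometry-ancilla}, where $\ket{\Psi_{\mathcal{V}}} = \bigotimes_{j} V_j \ket{0}$ --- and second, recording the closure statement under the explicit hypothesis $\prod_i \abs{S^{(i)}} = O(\poly{N})$, which holds in every use the paper makes of the proposition (one polynomial-cardinality history-state component tensored with re-encoded ancilla factors). In short, your route buys a correct and honestly scoped statement at the cost of a weaker claim, whereas the paper's proof is shorter but its final generalising sentence is, as written, unjustified; if anything, your caveat should be folded into the statement of the proposition itself.
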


Ref.~\cite{cade2023improved} establishes the following property of semi-classical encoded subset states.

\begin{proposition}[\textnormal{\cite[Lemma 4]{cade2023improved}}]\label{prop:sample-from-subset-state}
    Given the description of a semi-classical encoded subset state $\ket{S_{\mathcal{V}}}$, it is possible to efficiently sample from the probability distribution outputting the $M$-bit string $z \in \B^{M}$ with probability $\abs{\braket{z}{S_{\mathcal{V}}}}^2$.
\end{proposition}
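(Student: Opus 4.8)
The plan is to sample the $M$-bit outcome block-by-block using the chain rule, reducing the task to the efficient evaluation of prefix marginals. Since each $m_j = O(1)$ and there are $n$ isometries, the total register size is $M = \sum_{j\in[n]} m_j = O(n)$, and I partition any outcome string as $z = (z^{(1)},\dots,z^{(n)})$ with $z^{(j)}\in\B^{m_j}$ labelling the block produced by $V_j$. Writing $c_j(w,b) \coloneqq \bra{w}V_j\ket{b}$ for $w\in\B^{m_j}$ and $b\in\B$, the amplitude $\braket{z}{S_\mathcal{V}}$ factorises over the encoded registers for each fixed $x\in S$, so it suffices to show that, for every prefix length $k$, the marginal probability $p(z^{(1)},\dots,z^{(k)})$ of the first $k$ blocks can be computed in $\poly{n}$ time. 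Given this, I draw $z^{(1)}$ from its $2^{m_1}=O(1)$ possible values according to $p(z^{(1)})$, then $z^{(2)}$ from $p(z^{(2)}\mid z^{(1)}) = p(z^{(1)},z^{(2)})/p(z^{(1)})$, and so on, each conditional being a ratio of efficiently computable marginals.

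The key step is the evaluation of the prefix marginal. Projecting $\ket{S_\mathcal{V}}$ onto the fixed outcomes $z^{(1)},\dots,z^{(k)}$ and leaving the tail registers untouched yields the unnormalised vector $\frac{1}{\sqrt{\abs{S}}}\sum_{x\in S}\big(\prod_{i\le k}c_i(z^{(i)},x_i)\big)\bigotimes_{j>k}V_j\ket{x_j}$, whose squared norm is the desired marginal. Expanding the norm gives a double sum over $x,x'\in S$ in which the tail registers contribute the factor $\prod_{j>k}\bra{x'_j}V_j^\dagger V_j\ket{x_j}$. Because each $V_j$ is an isometry, $V_j^\dagger V_j = \mathbb{I}$, so this factor collapses to $\prod_{j>k}\braket{x'_j}{x_j} = \prod_{j>k}\delta_{x_j,x'_j}$; the orthonormality of the computational-basis seeds therefore kills every pair $(x,x')$ that disagrees on a tail bit. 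Hence $p(z^{(1)},\dots,z^{(k)}) = \frac{1}{\abs{S}}\sum_{x,x'}\prod_{i\le k}c_i(z^{(i)},x_i)\overline{c_i(z^{(i)},x'_i)}$, where the sum runs only over pairs $x,x'\in S$ agreeing on all bits $j>k$.

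This collapse is exactly what makes the scheme efficient, and I expect it to be the main obstacle to handle cleanly: a priori the marginal requires summing the full probability over the $2^{\sum_{j>k}m_j}$ completions of the tail, which is exponential, and the complete-string probabilities themselves exhibit genuine interference across the $x\in S$. The isometry identity replaces that exponential marginalisation with a closed form containing at most $\abs{S}^2 = O(\poly{n})$ surviving pairs, each a product of $O(n)$ inner products of size-$O(1)$ blocks computable in $O(1)$ time. Consequently every prefix marginal, and hence every conditional, is evaluable in $\poly{n}$ time; performing $n$ rounds of block sampling over $O(1)$-size alphabets yields a total running time of $\poly{n}$ and produces $z$ with the exact probability $\abs{\braket{z}{S_\mathcal{V}}}^2$, as required.
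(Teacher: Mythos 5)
Your proof is correct. For context: the paper itself contains no proof of this proposition --- it is imported verbatim as Lemma~4 of Ref.~\cite{CKFH+23} --- so there is no internal argument to compare against; your chain-rule construction is essentially the standard dequantisation argument underlying the cited result. Two points in your write-up deserve emphasis as the genuine content. First, you correctly avoid the tempting but wrong strategy of sampling $x \in S$ uniformly and then sampling each block from $\abs{\bra{z^{(j)}}V_j\ket{x_j}}^2$: that produces the mixture distribution and misses the interference between different $x \in S$ (e.g.\ for $S=\{0,1\}$, $V_1$ the Hadamard gate, the true distribution is a point mass on $z=0$ while the mixture is uniform). Second, the isometry identity $V_j^\dagger V_j = \mathbb{I}$ is exactly the right tool: it collapses the a priori exponential tail marginalisation to at most $\abs{S}^2 = \poly{n}$ surviving pairs, each a product of $O(n)$ constant-size inner products, so every prefix marginal and hence every chain-rule conditional is computable exactly in polynomial time. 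One minor point worth making explicit is that each conditional $p(z^{(k)} \mid z^{(1)},\dots,z^{(k-1)})$ is only ever evaluated at a prefix that was itself sampled with its (necessarily nonzero) marginal probability, so the division defining the conditional is always well posed.
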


Here we denote $M\coloneqq \sum_{j\in[n]} m_j$ as the size of the bit strings $x\in S$ after the action of the isometries.

\subsection{Complexity classes}
Throughout this work, we consider only \emph{promise problem} variants (unless explicitly specified otherwise), rather than any semantic definition.
We therefore drop all `promise' prefixes.
A notion of ``hard'' or ``complete'' problems for the classes we consider is therefore appropriate.
We define the complexity class \cl{BPP} using the circuit model language~\cite{watrous2008quantum}.

\begin{definition}[\cl{BPP}]\label{def:BPP}
    Let $L = (L_{\sc{yes}}, L_{\sc{no}})$ be a promise problem.
    A problem $L$ belongs to the class \cl{BPP} if there exists a polynomial-time generated Boolean circuit family $\mathcal{C} = \{C_n : n \in \mathbb{N}\}$, where each circuit $C_n$ acts on $n + r(n)$ input bits and produces one output bit, such that:
    \begin{itemize}
        \item If $x \in L_{\sc{yes}}$, then $\Pr_{y\sim U}[C_{|x|}(x,y) = {\tt 1}] \geq \frac{2}{3}$, 
        \item If $x \in L_{\sc{no}}$, then $\Pr_{y\sim U}[C_{|x|}(x,y) = {\tt 1}] \leq \frac{1}{3}$.
    \end{itemize}
    We denote $U(\{0,1\}^{r(|x|)})$, the uniform distribution over the $r(|x|)$-bit strings, as $U$.
\end{definition}

We now define the complexity class \cl{BQP}.

\begin{definition}[\cl{BQP}]\label{def:BQP}
    Let $L = (L_{\sc{yes}}, L_{\sc{no}})$ be a promise problem.
    A problem $L$ belongs to the class \cl{BQP} if there exists a polynomial-time generated quantum circuit family $\mathcal{Q} = \{Q_n : n \in \mathbb{N}\}$, where each circuit $Q_n$ acts on $n + r(n)$ input qubits and produces one output bit, such that:
    \begin{itemize}
        \item If $x \in L_{\sc{yes}}$, then $\Pr[Q_{|x|}(x) = {\tt 1}] \geq \frac{2}{3}$,
        \item If $x \in L_{\sc{no}}$, then $\Pr[Q_{|x|}(x) = {\tt 1}] \leq \frac{1}{3}$.
    \end{itemize}
\end{definition}

\subsection{Simulator Hamiltonians}
We follow the definitions of simulator Hamiltonians given in Refs.~\cite{bravyi2016complexity,cubitt2018universal}.
Consider two Hamiltonians $H_A$ and $H_B$ acting on Hilbert spaces $\mathcal{H}_A$ and $\mathcal{H}_B$, respectively.
An isometric encoding map $\mathcal{E} : \mathcal{H}_A \to \mathcal{H}_B$ and Hamiltonian $H_B$ are said to be an $(\eta,\epsilon)$-simulator for $H_A$ if there exists an isometry $\mathcal{V} : \mathcal{H}_A \to \mathcal{H}_B$ such that: the image of $\mathcal{V}$ is the low-energy subspace of $H_A$, $\norm{H_A - \mathcal{V}^\dagger H_B \mathcal{V}} \leq \epsilon$, and $\norm{\mathcal{E} - \mathcal{V}} \leq \eta$.
Equivalently, the low-energy subspace of $H_B$ approximates $H_A$, up to controlled errors.
To prove our hardness results we always take $\eta, \epsilon = O(1/\poly{n})$.
This is possible since all applications of this technique within this work can be achieved with polynomially small error parameters~\cite{bravyi2016complexity, bravyi2006complexity, waite2025complexitya}.

A consequence of this simulation is that the $j$-th smallest eigenvalues of $H_A$ and $H_B$ are close, satisfying $\norm{\lambda_j(H_B) - \lambda_j(H_A)} \leq \epsilon$, where $\lambda_j(H)$ denotes the $j$-th smallest eigenvalue of $H$~\cite[Lemma 1]{bravyi2016complexity}.
Additionally, it can be shown that the ground states of $H_A$ and $H_B$ are close under the encoding~\cite[Lemma 2]{bravyi2016complexity} --- this will be discussed later.

From a complexity-theoretic perspective, this notion of simulation ensures that any algorithm capable of solving the local Hamiltonian problem for $H_B$ can be used to solve it for $H_A$ with comparable accuracy.
This hardness preservation is formalised in the following proposition.

\begin{proposition}
    Let $H_B$ be a $(\eta,\epsilon)$-simulator for $H_A$.
    Then $H_B$ is at least as hard as $H_A$.
\end{proposition}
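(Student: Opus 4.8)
The plan is to read ``at least as hard as'' as the existence of a polynomial-time reduction from the ground-state energy estimation problem for $H_A$ to the corresponding problem for $H_B$, and to construct this reduction directly from the two defining properties of an $(\eta,\epsilon)$-simulator. The ingredient I expect to do the real work is the spectral closeness $\norm{\lambda_j(H_B) - \lambda_j(H_A)} \leq \epsilon$; the encoding condition $\norm{\mathcal{E}(H_B) - H_A} \leq \eta$ serves mainly to guarantee that $H_B$ can be produced from $H_A$ efficiently and that the low-energy subspaces are mapped consistently, so that a witness/solver for $H_B$ is meaningful for $H_A$.

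First I would fix an arbitrary instance of the \sc{Local Hamiltonian} problem for $H_A$, specified by energy thresholds $a < b$ with an inverse-polynomial promise gap $b - a = \Omega(1/\poly{n})$; the task is to decide whether $\lambda_0(H_A) \leq a$ (a yes-instance) or $\lambda_0(H_A) \geq b$ (a no-instance). Applying the $j = 0$ case of the spectral bound, a yes-instance forces $\lambda_0(H_B) \leq a + \epsilon$ while a no-instance forces $\lambda_0(H_B) \geq b - \epsilon$. I would therefore define translated thresholds $a' \coloneqq a + \epsilon$ and $b' \coloneqq b - \epsilon$ and pose the energy estimation problem for $H_B$ with these values.

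The next step is to verify that the translated instance is still a valid promise problem, i.e. that $a' < b'$, which holds precisely when $\epsilon < (b - a)/2$. I would accordingly insist that the simulation accuracy satisfy $\epsilon = O(1/\poly{n})$ chosen below half the original gap — exactly the regime in which simulator constructions are deployed. Under this choice the surviving gap $b' - a' = (b - a) - 2\epsilon$ remains inverse-polynomial, so any procedure (classical probabilistic or quantum verifier) that decides the energy estimation problem for $H_B$ with thresholds $(a', b')$ also decides the original $H_A$ instance; composing this procedure with the reduction yields the claim.

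The main obstacle, and essentially the only part demanding care, is the parameter bookkeeping: I must ensure $\epsilon$ is small enough relative to the source promise gap that the yes and no cases stay separated after translation, while keeping the construction of $H_B$ and of the shifted thresholds polynomial-time. Granting that $H_B$ is efficiently constructible from $H_A$ through the encoding $\mathcal{E}$ — which is part of what it means to be a simulator — this establishes that $H_B$ is at least as hard as $H_A$, consistent with the detailed treatments of Refs.~\cite{BH16,CPM18}.
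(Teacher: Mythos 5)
Your proposal is correct and follows essentially the same route as the paper's own proof: shift the thresholds to $a' = a + \epsilon$, $b' = b - \epsilon$, require $\epsilon < (b-a)/2$ so the translated promise gap $b' - a' = (b-a) - 2\epsilon$ stays inverse-polynomial, and transfer the answer via the spectral closeness of the simulator. The only (cosmetic) difference is that you state the implications in the forward direction (\textsc{yes}/\textsc{no} for $H_A$ forces \textsc{yes}/\textsc{no} for $H_B$), whereas the paper reads the answer back from $H_B$ to $H_A$ under the promise; both are equivalent formulations of the same reduction.
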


\begin{proof}
    Take an instance of $H_A$ to be defined as $x\coloneqq(H_A, a,b)$ such that $b-a \geq 1/\poly{n}$.
    Since $H_B$ is a $(\eta,\epsilon)$-simulator for $H_A$, we set the parameters $b' = b - \epsilon$ and $a' = a + \epsilon$.
    Setting $\epsilon < (b-a)/2$ ensures that $b'-a' = b-a - 2\epsilon \geq 1/\poly{n}$, meeting the criterion for a valid instance of $H_B$.
    It is not hard to see that in the event of a \sc{yes} case for $H_A$, i.e.
    $\lambda(H_A) \leq a$ then it must be that $\lambda(H_B) \leq a'$.
    Similarly the converse holds for the \sc{no} case.
\end{proof}

\subsection{Problem statement}
\begin{definition}[The \sc{Guided Local Hamiltonian} Problem]\label{def:GLHP}
    Given a $k$-local Hamiltonian $H$ acting on $n$ qubits such that $\norm{H} \leq 1$, parameters $a,b \in [0,1]$ such that $b-a \geq 1/\poly{n}$ and a description of a semi-classical encoded subset state $\ket{\zeta}$ with the promise that $\norm{\Pi_0\ket{\zeta}}^2 \geq \delta$ for some $0 <\delta < 1$, decide whether $\lambda_0(H) \leq a$ or $\lambda_0(H) \geq b$, promised one is true.
\end{definition}

The \sc{Guided Local Hamiltonian} problem for stoquastic Hamiltonians is defined similarly, with the additional constraint that the Hamiltonian is stoquastic.
We refer to the problem in this manner rather than ``\emph{The Guided Local Stoquastic Hamiltonian problem}'' to avoid cross-contamination with the problem presented by Bravyi in Ref.~\cite{bravyi2015monte}.
The description of the state $\ket{\zeta}$ is given in a classically efficient form, allowing for classical sample-query access~\cite{gharibian2023dequantizing} (\cref{prop:sample-from-subset-state}).
Such access is motivated by standard assumptions in classical simulation techniques~\cite{vandennest2011simulating,rudi2020approximating}.
The local Hamiltonian and parameters $a,b$ are presented using a polynomial number of bits, with the Hamiltonian encoded as a sum of local terms, each in the computational basis.
Note that in the \sc{Guided Local Stoquastic Hamiltonian} problem of Ref.~\cite{bravyi2015monte}, the guiding state is not regarded as a part of the input and furthermore, has additional criteria that is not present in our problem.

The criterion on the norm of the Hamiltonian is strong, in the sense that we would typically expect a physical Hamiltonian to have a norm bounded by a polynomial in $n$.
The renormalisation of the Hamiltonian can be achieved by a global factor scaling of $1/\poly{n}$.
The norm bound is a key factor in the applicability of dequantisation algorithms for the \sc{Guided Local Hamiltonian} problem.
Under relaxed conditions, such as constant precision and overlap, Ref.~\cite{gharibian2023dequantizing} gives a classical algorithm for problem when the norm is bounded by $1$ whereas Ref.~\cite{zhang2024dequantized} shows that the unphysical norm bound can be circumvented, allowing for a wider range of physically motivated and practical Hamiltonians to be considered.
For consistency with the original literature, we maintain the above assumptions but recognise the realistic limitations this imposes.

\section{Quantum BPP Circuits}
The complexity class \cl{BPP} captures the set of promise problems that are efficiently solvable by probabilistic algorithms~\cite{arora2009computational}.
Randomised algorithms involve random choices in their computation.
To perform this, it is sufficient to construct a random number generator producing the bit {\tt 0} or {\tt 1} with probability $1/2$, e.g., a coin flip.

A convenient way to represent such computation is via a \emph{coherent} implementation, in which randomness is encoded unitarily rather than sampled explicitly.
Translating this idea to a quantum circuit requires restricting to a subclass of quantum operations.
Specifically, we consider circuits comprised only of classically reversible gates~\cite{bennett1973logical}, i.e., from the set $\{I, X,\Gate{Cnot},\Gate{Toffoli}\}$.

The quantum circuit takes as input $\ket{x}$ for $x \in \B^n$, together with $m$ ancillae initialised in the state $\ket{0}$ and $p$ ancillae initialised in the state $\ket{+}$.
The $\ket{+}$ ancillae coherently encode the randomness of the computation, replacing classical coin flips by superposition.
The final measurement of the computation is on one qubit, in the $Z$-basis; assume without loss of generality that this is the first qubit.

We refer to this model as a \emph{(coherent) classically reversible verification circuit}.
In typical fashion we require that there is only a polynomial number of gates and a polynomial number of ancilla qubits.
This is taken for granted, i.e., the circuits are \emph{efficient}.

\begin{definition}[Classically Reversible Quantum Verification Circuit (CRQVC)]\label{def:CRQVC}
    A \emph{classically reversible quantum verification circuit} is a quantum circuit is a tuple $M_n = (n,m,p,U)$ where $n$ is the number of input qubits, $m$ is the number of ancillae initialised in the state $\ket{0}$, $p$ is the number of ancillae initialised in the state $\ket{+}$.
    The circuit $U$ is a quantum circuit on $n+m+p$ qubits composed of $K = \poly{n}$ classically reversible gates from the set $\{I, X,\Gate{Cnot},\Gate{Toffoli}\}$.
    The acceptance probability of the circuit is defined as
    \begin{equation*}
        \Pr[U(\ket{x}\ket{0^m}\ket{+^p}) = {\tt 1}] = \bra{\phi}U^\dagger \Pi_{\text{out}} U\ket{\phi},
    \end{equation*}
    where $\ket{\phi} = \ket{x,0^m,+^p}$ and $\Pi_{\text{out}} = \ketbra{1}_1$ is the projector onto the first qubit in the $Z$-basis.
\end{definition}

Note that $m,p = O(\poly{n})$.
This definition aligns with the semi-classical verification circuits defined in Ref.~\cite{waite2025complexitya}.
We have opted for the term `classically reversible quantum verification circuit' to avoid confusion with the semi-classical subset states.
The class \clsb{BPP}{q} is defined as follows:

\begin{definition}[\clsb{BPP}{q}]\label{def:BPPq}
    Let $L = (L_{\sc{yes}}, L_{\sc{no}})$ be a promise problem.
    A problem $L$ belongs to the class \clsb{BPP}{q} if there exists a polynomial-time generated CRQVC family $\mathcal{M} = \{M_n : n \in \mathbb{N}\}$, where each circuit $M_n$ acts on $n + m + p$ input bits and produces one output bit, such that:
    \begin{itemize}
        \item If $x \in L_{\sc{yes}}$, then $\Pr[M_{|x|}(x) = {\tt 1}] \geq \frac{2}{3}$,
        \item If $x \in L_{\sc{no}}$, then $\Pr[M_{|x|}(x) = {\tt 1}] \leq \frac{1}{3}$.
    \end{itemize}
\end{definition}

\begin{restatable}[]{theorem}{BPPqBPP}
    \label{thrm:BPPq-is-BPP}
    \clsb{BPP}{q} $=$ \cl{BPP}.
\end{restatable}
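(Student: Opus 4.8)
The plan is to establish the two inclusions $\cl{BPP} \subseteq \cl{BPP}\textsubscript{q}$ and $\cl{BPP}\textsubscript{q} \subseteq \cl{BPP}$ separately, with both directions resting on a single structural observation: each gate in $\{X,\Gate{Cnot},\Gate{Toffoli}\}$ is a permutation matrix in the computational basis, so the unitary $U$ of any CRQVC merely permutes basis states and performs no interference. Concretely, expanding the $\ket{+}$ ancillae as $\ket{+^p} = 2^{-p/2}\sum_{y\in\B^p}\ket{y}$, the final state is $2^{-p/2}\sum_{y} U\ket{x,0^m,y}$, a uniform superposition over \emph{mutually orthogonal} basis states. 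Measuring the first qubit then gives
\begin{equation*}
    \bra{\phi}U^\dagger \Pi_{\rm out} U\ket{\phi} = \frac{1}{2^p}\sum_{y\in\B^p}\big[\,\text{first bit of }\pi(x,0^m,y) = {\tt 1}\,\big],
\end{equation*}
where $\pi$ is the permutation of basis states induced by $U$. Crucially, the absence of cross terms makes the right-hand side exactly a classical randomised acceptance probability averaged over uniform bits $y$.

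For $\cl{BPP}\subseteq\cl{BPP}\textsubscript{q}$, I would take a Boolean circuit family $C_n$ on $n+r(n)$ bits witnessing membership in $\cl{BPP}$ and reversibilise it: since $\{X,\Gate{Cnot},\Gate{Toffoli}\}$ is universal for reversible classical computation, Bennett's construction yields a reversible circuit $U$ over these gates, using $\poly{n}$ gates and $\ket{0}$ work/garbage ancillae, that reproduces the output bit of $C_n$. The $r(n)$ random inputs are supplied by setting $p = r(n)$ ancillae to $\ket{+}$, so the tuple $(n,m,p,U)$ is a valid CRQVC per \cref{def:CRQVC}. By the observation above, its acceptance probability equals $\tfrac{1}{2^p}\sum_{y} C_n(x,y) = \Pr_{y\sim U}[C_{|x|}(x,y)={\tt 1}]$, so the $\tfrac{2}{3}$ and $\tfrac{1}{3}$ thresholds of \cref{def:BPP,def:BPPq} transfer verbatim.

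For $\cl{BPP}\textsubscript{q}\subseteq\cl{BPP}$, I would start from a CRQVC $M_n=(n,m,p,U)$ and simulate it classically: sample $y$ uniformly from $\B^p$ to replace the $\ket{+}$ ancillae, fix the $m$ ancillae to ${\tt 0}$, and evaluate each reversible gate in turn --- each acts as an explicit permutation on a constant number of bits and is computable in classical polynomial time. Reading off the first bit reproduces the distribution displayed above, giving a polynomial-time randomised algorithm with identical completeness and soundness, hence a $\cl{BPP}$ decision procedure.

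The only genuine obstacle is justifying the interference-free claim --- that measuring a single qubit of the permuted uniform superposition reproduces \emph{exactly} the classical randomised statistics. This reduces to the orthogonality of distinct images under a permutation, which annihilates the cross terms, so it is conceptually the crux but technically light. A secondary bookkeeping point is the standard reversibilisation of $C_n$ (managing ancillae and garbage via Bennett's trick) so that the construction conforms to the CRQVC format of \cref{def:CRQVC}.
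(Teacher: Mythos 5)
Your proposal is correct and follows essentially the same route as the paper: both inclusions are proved by exchanging the $\ket{+^p}$ ancillae for uniformly random classical bits (and vice versa), exploiting that $\{X,\Gate{Cnot},\Gate{Toffoli}\}$ acts by permutation on computational basis states. Your write-up is in fact tighter than the paper's at the two points it treats informally --- the exact equality of acceptance probabilities (via orthogonality of permuted basis states killing cross terms) and the explicit appeal to Bennett's reversibilisation for the $\cl{BPP}\subseteq\cl{BPP}\textsubscript{q}$ direction.
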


The proof of this theorem is outlined in~\cref{app:proofs}.
The typical amplification procedure for \cl{BPP} circuits can be applied to \clsb{BPP}{q} circuits.
This is done by repeating the computation a polynomial number of times and taking the majority output.
Moreover, \cl{BPP}$(2/3,1/3)$ = \cl{BPP}$(1 - 2^{-f(n)}, 2^{-f(n)})$ for any polynomially-bounded function $f : \mathbb{N} \to \mathbb{N}$, with $f(n) \geq 2$; we can therefore assume our statistics for \clsb{BPP}{q} are $1-2^{-f(n)}$ and $2^{-f(n)}$ respectively.
The implications of~\cref{thrm:BPPq-is-BPP} are that the Feynman-Kitaev circuit-to-Hamiltonian construction~\cite{kitaev2002classical} can easily be applied.

\subsection{Stoquastic Arthur}
A natural question regarding \cl{StoqMA} concerns the position of its verification circuit within the complexity hierarchy.
When viewing \cl{MA} and \cl{QMA} as interactive proof systems, the verifiers are limited to \cl{BPP} and \cl{BQP} computations, respectively~\cite{babai1985trading,complexityzoo}.
This follows from the fact that for promise problem complexity classes, the existential quantifiers $\exists$ (classical) and $\hat{\exists}$ (quantum) yield $\cl{MA} = \exists\cdot\cl{BPP}$ and $\cl{QMA} = \hat{\exists}\cdot\cl{BQP}$, respectively.
Contrary to expectation, \cl{StoqMA}'s verifier does not lie between these classes.
Specifically, a \emph{stoquastic Arthur} can be simulated by a \cl{BPP} circuit.
This result can be seen due to the sensitivity of \cl{StoqMA} to the proof type and completeness conditions~\cite{bravyi2006merlin,aharonov2025stoqma,liu2021stoqma}.
For instance, \cl{eStoqMA} and \cl{StoqMA}\textsubscript{$1$} are subsets of \cl{MA}~\cite{bravyi2006merlin,liu2021stoqma}.
In analogy the existential quantifier viewpoint, we can say $\cl{StoqMA} = \hat{\exists}\cdot\cl{StoqP}$ (see \cref{thm:stoqma-equals-stoqp}), where $\cl{StoqP}$ can be viewed as \cl{StoqMA} without a proof.
We formally define \cl{StoqP} in~\cref{app:StoqP} and provide a proof to justify this particular choice.

We leverage a result from Ref.~\cite{liu2021stoqma} to show that \cl{StoqP} is contained in \clsb{BPP}{q}.
\begin{result}\label{res:cStoqMA-is-MA}
    For any $1/2 \leq \beta < \alpha \leq 1$ and $\alpha- \beta \geq 1/\poly{n}$,
    \begin{equation*}
        \cl{cStoqMA}(\alpha,\beta) \subseteq \cl{MA}(2\alpha- 1,2\beta- 1).
    \end{equation*}
\end{result}

This result shows that when Merlin is restricted to classical proofs, \cl{MA} can simulate \cl{cStoqMA}.
Specifically, if $V$ is the verifier for \cl{cStoqMA}, and $M$ is the verifier for \cl{MA}, for the proof $\xi \in \{0,1\}^{w}$, then 
\begin{equation*}
    \Pr[V ~{\text{accept}} ~\xi] = \frac{1}{2} + \frac{1}{2} \Pr[M ~{\text{accept}} ~\xi].
\end{equation*}
Using our suggested definition of \cl{StoqP}, we adapt this result by replacing the classical proof $\xi$ with the instance $x$ --- treating $x$ like a `pseudo-proof'.
The same arguments imply that \cl{StoqP} $\subseteq$ \clsb{BPP}{q}.
Moreover, if $\cl{StoqP} = \cl{StoqP}(\alpha,\beta)$ for some $1/2 \leq \beta < \alpha \leq 1$ and $\alpha-\beta \geq 1/\poly{n}$, such that \cl{StoqP} relates to \cl{StoqMA} without a proof, then 
\begin{equation*}
    \cl{StoqP}(\alpha,\beta) \subseteq \clsb{BPP}{q}(2\alpha-1,2\beta-1),
\end{equation*}
follows from~\cref{res:cStoqMA-is-MA}.
\cref{app:StoqP} outlines the proof.
To summarise, the verification circuit of a sole stoquastic Arthur is no more powerful than a \cl{BPP} circuit.

\subsection{The action of classically reversible gates}
From \cref{def:BPPq}, we restrict our attention to the set of classically reversible gates $\{I, X,\Gate{Cnot},\Gate{Toffoli}\}$.
The Toffoli gate is well-known to be universal for classical reversible computation~\cite{toffoli1980reversible}.
This implies that, without loss of generality, any \clsb{BPP}{q} circuit can be expressed solely as a sequence of Toffoli gates.
The action of classical reversible gates on $n$-bit strings is well-defined and can be exactly quantified.

For a subset $S\subseteq \B^n$, let the action of a classically reversible gate $R \in \B^{2^n \times 2^n}$ be denoted $R(S)$.
Specifically, $R(S) = \{R\,x : x\in S\}$.

\begin{proposition}
    For any subset $S \subseteq \B^n$, the action of a classically reversible gate $R$ leaves the cardinality of the subset invariant, i.e., $\abs{S} = \abs{R(S)}$.
\end{proposition}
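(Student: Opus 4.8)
The plan is to exploit the fact that every classically reversible gate acts as a \emph{permutation} of the computational basis strings, and that a bijection cannot change the cardinality of a set to which it is applied.

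First I would recall that each generator of the gate set, namely $X$, $\Gate{Cnot}$ and $\Gate{Toffoli}$, is a permutation matrix in the computational basis: it sends each basis string $x \in \B^n$ to a unique basis string and is its own inverse. Consequently any product of such gates, and hence any classically reversible gate $R$, is again a permutation matrix. Equivalently, $R$ induces a bijection $\sigma_R : \B^n \to \B^n$ defined by $R\ket{x} = \ket{\sigma_R(x)}$, or, in the bit-string notation of the statement, by $Rx = \sigma_R(x)$.

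Next I would observe that, by the definition $R(S) = \{Rx : x \in S\}$, the set $R(S) = \sigma_R(S)$ is simply the image of $S$ under $\sigma_R$. Since $\sigma_R$ is injective on all of $\B^n$, its restriction to $S$ is injective as well, so $x \mapsto \sigma_R(x)$ is a bijection from $S$ onto $R(S)$. A bijection between finite sets forces them to have equal size, which gives $\abs{S} = \abs{R(S)}$.

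The main (and essentially only) point requiring care is the opening claim that reversibility entails that $R$ is a genuine permutation matrix rather than merely an invertible one; this follows directly from the gate set being generated by involutory permutation gates, so the argument reduces to unpacking the definition of classical reversibility. No spectral or analytic input is needed, and the result is purely combinatorial.
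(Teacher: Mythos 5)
Your argument is correct and is essentially the paper's own: the paper proves this proposition by simply noting that $R$ is a bijective map on the set of $n$-bit strings, which is precisely the core of your reasoning. Your additional unpacking --- that $X$, $\Gate{Cnot}$ and $\Gate{Toffoli}$ are permutation matrices and products of permutations are permutations --- merely makes explicit what the paper leaves implicit.
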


This follows from the fact that the gate $R$ is a bijective map on the set of $n$-bit strings.
The impact on subset states is then captured in the following proposition.

\begin{restatable}[]{proposition}{classicallyreversiblegateonsubsetstate}
    \label{prop:classically-reversible-gate-on-subset-state}
    Given a subset state $\ket{S}$, the action of a classically reversible gate $R$ on the state results in a subset state $\ket{R(S)}$ such that $\abs{S} = \abs{R(S)}$.
\end{restatable}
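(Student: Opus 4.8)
The plan is to exploit the defining feature of a classically reversible gate: viewed as an element of $\B^{2^n\times 2^n}$, it is a permutation matrix, so it sends each computational basis state to another computational basis state rather than to a superposition. First I would record that for every $x\in\B^n$ we have $R\ket{x}=\ket{Rx}$, where $Rx\in\B^n$ is the image of the bit string $x$ under the bijection induced by $R$. This is the only structural input needed, and it is exactly what distinguishes gates drawn from $\{X,\Gate{Cnot},\Gate{Toffoli}\}$ from general unitaries.

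With that in hand, the computation is short. Applying $R$ to the subset state and using linearity,
\begin{equation*}
    R\ket{S} = \frac{1}{\sqrt{\abs{S}}}\sum_{x\in S} R\ket{x} = \frac{1}{\sqrt{\abs{S}}}\sum_{x\in S}\ket{Rx}.
\end{equation*}
Because $R$ is a bijection on $\B^n$, its restriction to $S$ is an injection, so the map $x\mapsto Rx$ is a bijection from $S$ onto $R(S)$; reindexing the sum by $y=Rx$ therefore gives $\sum_{x\in S}\ket{Rx}=\sum_{y\in R(S)}\ket{y}$ with no repeated terms.

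The final step is to fix the normalisation. Here I would invoke the preceding proposition, $\abs{S}=\abs{R(S)}$, which lets me replace $\sqrt{\abs{S}}$ by $\sqrt{\abs{R(S)}}$ in the prefactor, yielding
\begin{equation*}
    R\ket{S} = \frac{1}{\sqrt{\abs{R(S)}}}\sum_{y\in R(S)}\ket{y} = \ket{R(S)},
\end{equation*}
which is precisely the subset state over $R(S)$. There is no genuine obstacle in this argument; the only point that must be handled with care is the reindexing, where injectivity of $R$ on $S$ guarantees that the image is again an unweighted sum over \emph{distinct} basis states (so that $\ket{R(S)}$ is a legitimate subset state), while the cardinality-preservation proposition is what ensures the normalisation constant is the correct one.
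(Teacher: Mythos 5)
Your proof is correct and follows essentially the same route as the paper's: apply $R$ termwise, use the fact that a classically reversible gate permutes computational basis states to reindex the sum over $R(S)$, and invoke the cardinality-preservation proposition to identify the normalisation constant. The paper's version is just a more compressed rendition of exactly this argument.
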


We will consider the action a sequence of classically reversible gates has on a subset state, informed by the structure of the history state that will be produced by the circuit-to-Hamiltonian construction~\cite{kitaev2002classical}.
This will be beneficial in the subsequent analysis of the \sc{Guided Local Hamiltonian} problem.

\begin{restatable}[]{lemma}{historystatesubsetstate}
    \label{lma:history-state-subset-state}
    Let $\mathcal{R} = (R_k)_{k\in [K]}$ be a sequence of $K$ classically reversible gates.
    Let $\ket{S}$ be a subset state over $S\subseteq \B^n$.
    The state 
    \begin{equation}\label{eq:history-state-one}
        \ket{\mathcal{S}} = \frac{1}{\sqrt{K}} \sum_{k \in [K]} R_k \cdots R_1 \ket{S}\ket{1^{k}\,0^{K-k}},
    \end{equation}
    is a subset state over 
    \begin{equation}
        \mathcal{S} = \bigcup_{k \in [K]} \big(\left(R_k \cdots R_1(S)\right)\times \{1^{k}\,0^{K-k}\}\big),
    \end{equation}
    where $\abs{\mathcal{S}} = \abs{S}\cdot K \leq 2^{n + \log_2(K)}$.
\end{restatable}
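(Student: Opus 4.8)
The plan is to expand $\ket{\mathcal{S}}$ term-by-term and exploit the fact that the unary clock register $\ket{1^k\,0^{K-k}}$ uniquely tags each value of $k$, so basis states arising from distinct $k$ can never collide. First I would observe that each partial product $R_k \cdots R_1$ is itself a classically reversible gate, being a composition of bijections on $\B^n$. Applying \cref{prop:classically-reversible-gate-on-subset-state} (either iteratively or once to the composed gate) then shows that $R_k \cdots R_1\ket{S}$ is a subset state over $R_k \cdots R_1(S)$ with $\abs{R_k \cdots R_1(S)} = \abs{S}$. Substituting these uniform-amplitude subset states into the definition yields
\begin{equation*}
    \ket{\mathcal{S}} = \frac{1}{\sqrt{K\abs{S}}} \sum_{k\in[K]} \sum_{y \in R_k\cdots R_1(S)} \ket{y}\ket{1^k\,0^{K-k}},
\end{equation*}
in which every coefficient is the same constant $1/\sqrt{K\abs{S}}$.

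The load-bearing step is then to argue that the collection of basis states appearing in this sum is exactly $\mathcal{S} = \bigcup_{k\in[K]} \big((R_k\cdots R_1(S)) \times \{1^k\,0^{K-k}\}\big)$, and that this union is \emph{disjoint}. The disjointness follows because the clock string $1^k\,0^{K-k}$ determines $k$, so terms with distinct clock values lie in orthogonal subspaces of the clock register, while within a fixed $k$ the strings $y \in R_k\cdots R_1(S)$ are already distinct by definition of a set. Consequently no two terms coincide, there is no interference or double-counting, and each element of $\mathcal{S}$ appears exactly once with amplitude $1/\sqrt{K\abs{S}}$. This is precisely the form $\frac{1}{\sqrt{\abs{\mathcal{S}}}}\sum_{z\in\mathcal{S}}\ket{z}$ of a subset state, provided the normalisation matches.

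The cardinality count then falls out of the same disjointness: $\abs{\mathcal{S}} = \sum_{k\in[K]} \abs{R_k\cdots R_1(S)} = \sum_{k\in[K]} \abs{S} = K\abs{S}$, which agrees with the amplitude $1/\sqrt{K\abs{S}}$ and confirms that $\ket{\mathcal{S}}$ is a genuine subset state over $\mathcal{S}$. The stated bound is immediate afterwards: since $S \subseteq \B^n$ we have $\abs{S} \leq 2^n$, hence $\abs{\mathcal{S}} = K\abs{S} \leq 2^n K = 2^{\,n+\log_2 K}$. I expect the only genuinely subtle point to be the disjointness of the clock-labelled blocks --- ruling out any cross-term cancellation or overcounting across the $K$ summands --- while the remaining manipulations are routine substitution and counting that I would not belabour.
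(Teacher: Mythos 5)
Your proposal is correct and takes essentially the same route as the paper's proof: both hinge on the disjointness of the clock-labelled blocks $S_k = \left(R_k \cdots R_1(S)\right)\times \{1^{k}\,0^{K-k}\}$ together with the cardinality-preserving (bijective) action of the classically reversible gates, yielding uniform amplitude $1/\sqrt{\abs{S}\cdot K}$ and the count $\abs{\mathcal{S}} = \abs{S}\cdot K$. The only cosmetic difference is that you invoke \cref{prop:classically-reversible-gate-on-subset-state} explicitly for the partial products $R_k\cdots R_1$, whereas the paper leaves that step implicit.
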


The proof of this lemma is given in~\cref{app:proofs}.
The state $\ket{\mathcal{S}}$ can be efficiently expressed as a polynomial list of $n+K$-bit strings.

\begin{corollary}
    \label{cor:history-state-semi-classical-subset-state}
    Let $\mathcal{R} = (R_k)_{k\in [K]}$ be a sequence of $K$ classically reversible gates such that $K = p(n)$ for some polynomial $p$.
    Let $\ket{S}$ be a semi-classical subset state over $S\subset \B^n$ with $\abs{S} = q(n)$ for some polynomial $q$.
    The state 
    \begin{equation}\label{eq:history-state-two}
        \ket{\mathcal{S}} = \frac{1}{\sqrt{K}} \sum_{k \in [K]} R_k \cdots R_1 \ket{S}\ket{1^{k}\,0^{K-k}},
    \end{equation}
    is a semi-classical subset state over 
    \begin{equation*}
        \mathcal{S} = \bigcup_{k \in [K]} \big(\left(R_k \cdots R_1(S)\right)\times \{1^{k}\,0^{K-k}\}\big),
    \end{equation*}
    where $\abs{\mathcal{S}} = \abs{S}\cdot K = q(n)\cdot p(n) \leq \poly{n}$.
\end{corollary}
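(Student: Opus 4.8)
The plan is to obtain this corollary as an immediate specialisation of \cref{lma:history-state-subset-state}, so almost no new structural work is required. First I would invoke the lemma verbatim: the hypotheses of the corollary are strictly stronger than those of the lemma --- a semi-classical subset state is in particular a subset state, and a sequence of $K = p(n)$ gates is in particular a sequence of $K$ classically reversible gates --- so the lemma already certifies that $\ket{\mathcal{S}}$ is a subset state over $\mathcal{S} = \bigcup_{k\in[K]}\big((R_k\cdots R_1(S))\times\{1^k\,0^{K-k}\}\big)$ with cardinality $\abs{\mathcal{S}} = \abs{S}\cdot K$. Nothing about the amplitudes or the set $\mathcal{S}$ itself needs to be recomputed.

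The remaining work is purely a counting argument to upgrade the classification from subset state to semi-classical subset state. By the definition of a semi-classical subset state, the only extra requirement is that the cardinality of the defining set be $O(\poly{n})$ in the number of qubits. Here I would substitute the two polynomial hypotheses $K = p(n)$ and $\abs{S} = q(n)$ into the lemma's cardinality bound to obtain $\abs{\mathcal{S}} = q(n)\cdot p(n)$, and then appeal to closure of the polynomials under multiplication: a product of two polynomials is again a polynomial, hence $\abs{\mathcal{S}} = \poly{n}$.

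The one point I would check explicitly is that the ambient register size has also stayed controlled, since the strings in $\mathcal{S}$ now live in $\B^{n+K}$ rather than $\B^{n}$. As $n+K = n + p(n)$ is itself polynomial in $n$, the cardinality $\abs{\mathcal{S}}$ is polynomial in the total qubit count $n+K$ as well, and the state admits an explicit description as a list of $\poly{n}$ many $(n+K)$-bit strings. With both the cardinality and the register size polynomial, $\ket{\mathcal{S}}$ satisfies the definition of a semi-classical subset state. I do not anticipate a genuine obstacle: the lemma carries all the structural content, and the only care needed is bookkeeping to confirm that the product of the two polynomial bounds, and the enlarged register, both remain polynomial in $n$.
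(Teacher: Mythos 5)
Your proposal is correct and matches the paper's treatment: the paper gives no separate proof of this corollary precisely because it follows by specialising \cref{lma:history-state-subset-state} to polynomial $K = p(n)$ and $\abs{S} = q(n)$ and observing that $\abs{\mathcal{S}} = q(n)\cdot p(n)$ remains polynomial, which is exactly your argument. Your extra check that the enlarged register size $n+K$ stays polynomial is sound bookkeeping and consistent with the paper's remark that $\ket{\mathcal{S}}$ is expressible as a polynomial list of $(n+K)$-bit strings.
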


The resulting subset states from~\cref{eq:history-state-one} and~\cref{eq:history-state-two} can be expressed as 
\begin{equation*}
    \ket{\mathcal{S}} = \frac{1}{\sqrt{\abs{S}\cdot K}} \sum_{s\in \mathcal{S}} \ket{s}.
\end{equation*}
It follows from~\cref{prop:sample-from-subset-state} that we can efficiently sample from the probability distribution outputting associated with~\cref{eq:history-state-two}.
Note that~\cref{lma:history-state-subset-state} cannot be straightforwardly applied to semi-classical encoded subset states unless the form of each isometry results in a mapped state of equal (uniform) amplitude, i.e., $V_j\ket{x_j}$ is itself a subset state.
The following lemma is a consequence of~\cref{lma:history-state-subset-state} and~\cref{def:semi-classical-encoded-subset-state}.

\begin{restatable}[]{lemma}{partialhistorySCESS}
    \label{lma:partial-history-SCESS}
    Let $\mathcal{R} = (R_k)_{k\in [K]}$ be a sequence of $K$ classically reversible gates such that $K = p(n)$ for some polynomial $p$.
    Furthermore, let $\mathcal{V} = \{V_j\}_{j\in[n]}$ be an ordered set of isometries such that the first $X \subseteq [n]$ isometries are the identity.

    Assume that for any $k\in [K]$, the gate $R_k$ has support only on qubits coinciding with $X$.
    Let $\ket{S_{\mathcal{V}}}$ be a semi-classical encoded subset state over $(S,\mathcal{V})$ with $S \subset \B^n$ and $\abs{S} = q(n)$ for some polynomial $q$.
    The state 
    \begin{equation*}
        \ket{\mathcal{S}_{\mathcal{V}}} = \frac{1}{\sqrt{K}} \sum_{k \in [K]} R_k \cdots R_1 \ket{S_{\mathcal{V}}}\ket{1^{k}\,0^{K-k}},
    \end{equation*}
    is a semi-classical encoded subset state over
    \begin{equation*}
        \mathcal{S} = \bigcup_{k \in [K]} \big(\left(R_k \cdots R_1(S)\right)\times \{1^{k}\,0^{K-k}\}\big),
    \end{equation*}
    where $\abs{\mathcal{S}} = \abs{S}\cdot K = q(n)\cdot p(n) \leq \poly{n}$.
\end{restatable}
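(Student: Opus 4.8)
The plan is to mirror the proof of \cref{lma:history-state-subset-state}, but to track how the gates act on the \emph{encoded} components, exploiting the fact that the gates and the nontrivial isometries are supported on disjoint registers. First I would partition the $n$ input qubits into the block $X\subseteq[n]$ on which every $V_j$ is the identity and its complement $\bar X = [n]\setminus X$, and write $W \coloneqq \bigotimes_{j\in\bar X} V_j$ for the combined isometry on $\bar X$. For each $x\in S$, writing $x = (x_X, x_{\bar X})$, the encoded component factorises as $\bigotimes_{j\in[n]} V_j\ket{x_j} = \ket{x_X}\otimes W\ket{x_{\bar X}}$. I would stress that this does \emph{not} mean $\ket{S_{\mathcal{V}}}$ itself factorises across $X$ and $\bar X$, since distinct $x$ may share the same $x_X$; the argument must therefore keep the sum over $x\in S$ intact and proceed summand-by-summand rather than splitting the state.

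The key step is the commutation argument. By hypothesis each $R_k$ is supported only on qubits in $X$, hence it commutes with $W$ and acts as a bijection $\sigma_k$ on the $X$-register bit strings while leaving the $\bar X$ register untouched, so $R_k\cdots R_1\big(\ket{x_X}\otimes W\ket{x_{\bar X}}\big) = \sigma_k\ket{x_X}\otimes W\ket{x_{\bar X}}$. Because no $\bar X$ coordinate is moved, the combined string $(\sigma_k(x_X),\, x_{\bar X})$ is precisely $R_k\cdots R_1(x)$ viewed as an $n$-bit string. Substituting into the definition of $\ket{\mathcal{S}_{\mathcal{V}}}$ and appending the clock register, every summand takes the form $\big(\bigotimes_{j\in[n]} V_j\ket{(R_k\cdots R_1(x))_j}\big)\otimes\ket{1^k0^{K-k}}$, which is exactly the encoding, under $\mathcal{V}$ extended by identities on the clock, of the classical string $(R_k\cdots R_1(x),\, 1^k0^{K-k})\in\mathcal{S}$.

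It then remains to verify the cardinality and that $\ket{\mathcal{S}_{\mathcal{V}}}$ carries the correct normalisation. Distinct values of $k$ yield orthogonal clock registers, so the union defining $\mathcal{S}$ is disjoint in $k$; within a fixed $k$, \cref{prop:classically-reversible-gate-on-subset-state} (cardinality invariance under reversible gates) gives $\abs{R_k\cdots R_1(S)} = \abs{S}$, whence $\abs{\mathcal{S}} = K\abs{S} = p(n)q(n)\le\poly{n}$. Since each $V_j$ is an isometry the encoded components are orthonormal and the $1/\sqrt{K}$ prefactor combines with the $1/\sqrt{\abs{S}}$ of $\ket{S_{\mathcal{V}}}$ to give the correct $1/\sqrt{\abs{\mathcal{S}}}$ normalisation, so $\ket{\mathcal{S}_{\mathcal{V}}}$ is a semi-classical encoded subset state over $(\mathcal{S},\mathcal{V}')$ with $\mathcal{V}'$ the extension of $\mathcal{V}$ by $K$ identity isometries. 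I expect the main obstacle to be purely presentational: making precise the commutation $R_k(\mathbb{I}_X\otimes W) = \sigma_k\otimes W$ and the identification $(\sigma_k(x_X),\, x_{\bar X}) = R_k\cdots R_1(x)$, since the entanglement between the $X$ and $\bar X$ registers blocks the clean state-level factorisation available in \cref{lma:history-state-subset-state} and forces the bookkeeping to be done one summand at a time.
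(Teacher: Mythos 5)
Your proposal is correct and follows essentially the same route as the paper: the paper's own (much terser) proof simply notes that the classically reversible gates commute with the isometries because their non-trivial supports are disjoint, and then appeals to the argument of \cref{lma:history-state-subset-state} as in \cref{cor:history-state-semi-classical-subset-state}. Your write-up is a faithful, fully detailed expansion of exactly that idea, including the correct summand-by-summand bookkeeping and the cardinality/normalisation checks.
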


The proof of this lemma is given in~\cref{app:proofs}.
The state $\ket{\mathcal{S}_{\mathcal{V}}}$ can be efficiently expressed as a polynomial list of $n+K$-bit strings paired with a sequence of $n+K$ isometries.

\section{Classical Hardness}
Starting with a \clsb{BPP}{q} circuit $M_{|x|}$ that decides the promise problem $L = (L_{\sc{yes}}, L_{\sc{no}})$, we apply the Feynman-Kitaev circuit-to-Hamiltonian construction to map the circuit to a $6$-local stoquastic Hamiltonian.
The difference here in contrast to typical reductions of this type, is: \begin{inparaenum}[(i)] \item the application of the Schrieffer-Wolff transformation to produce the final Hamiltonian of the reduction, \item no proof state and \item proving the existence of a semi-classical subset guiding state --- $\ket{\zeta}$. \end{inparaenum}

The following lemma will be useful in the subsequent analysis when bounding the difference between a triplet of states.

\begin{restatable}[]{lemma}{normtracking}
    \label{lma:norm-tracking}
    Let $\ket{a}, \ket{b}, \ket{c} \in (\mathbb{C}^2)^{\otimes n}$ be normalised states, such that $\norm{\ket{a} - \ket{b}} \leq \epsilon_{ab}$ and $\abs{\braket{b}{c}}^2 \geq \delta_{bc}$. Then:
    \begin{equation*}
        \begin{cases}
            (\sqrt{\delta_{bc}} - \epsilon_{ab})^2 \leq \abs{\braket{a}{c}}^2 \leq (\sqrt{\delta_{bc}} + \epsilon_{ab})^2 & \text{if}~ \epsilon_{ab} \leq \sqrt{\delta_{bc}}, \\[0.25cm]
            0 \leq \abs{\braket{a}{c}}^2 \leq (\sqrt{\delta_{bc}} + \epsilon_{ab})^2 & \text{if}~ \epsilon_{ab} > \sqrt{\delta_{bc}}.
            \end{cases}
    \end{equation*}
\end{restatable}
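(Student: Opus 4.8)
The plan is to pass from the vector statement to a scalar one about the magnitudes $\abs{\braket{a}{c}}$ and $\abs{\braket{b}{c}}$, treated as nonnegative reals, and then to square the resulting linear bounds. The only tools needed are the Cauchy--Schwarz inequality and the reverse triangle inequality. First I would write $\braket{a}{c} - \braket{b}{c} = (\bra{a}-\bra{b})\ket{c}$ and bound its magnitude by Cauchy--Schwarz, $\abs{(\bra{a}-\bra{b})\ket{c}} \le \norm{\ket{a}-\ket{b}}\,\norm{\ket{c}} \le A$, using $\norm{\ket{c}} = 1$ together with the hypothesis $\norm{\ket{a}-\ket{b}} \le A$. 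Applying the reverse triangle inequality to the two complex numbers $\braket{a}{c}$ and $\braket{b}{c}$ then gives $\bigl\lvert \abs{\braket{a}{c}} - \abs{\braket{b}{c}} \bigr\rvert \le A$, i.e. the two-sided estimate
\begin{equation*}
    \abs{\braket{b}{c}} - A \;\le\; \abs{\braket{a}{c}} \;\le\; \abs{\braket{b}{c}} + A.
\end{equation*}

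Next I would insert the overlap promise. Since $\abs{\braket{b}{c}}^2 \ge B$ gives $\abs{\braket{b}{c}} \ge \sqrt{B}$, evaluating at the threshold value $\sqrt{B}$ yields $\sqrt{B} - A \le \abs{\braket{a}{c}} \le \sqrt{B} + A$. It remains to square these bounds, and here the care lies in the fact that squaring preserves an inequality only between nonnegative quantities. Because $\abs{\braket{a}{c}} \ge 0$ always holds, the upper bound squares to $\abs{\braket{a}{c}}^2 \le (\sqrt{B}+A)^2$ in every case. For the lower bound I would split on the sign of $\sqrt{B}-A$: when $A \le \sqrt{B}$ the quantity $\sqrt{B}-A$ is nonnegative, so squaring is order-preserving and gives $\abs{\braket{a}{c}}^2 \ge (\sqrt{B}-A)^2$; when $A > \sqrt{B}$ the linear lower bound is negative and hence vacuous, leaving only the trivial $\abs{\braket{a}{c}}^2 \ge 0$. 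This reproduces exactly the case distinction in the statement.

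The main obstacle is not any single hard estimate but the bookkeeping around the absolute values: one must apply the reverse triangle inequality to the magnitudes rather than to the complex inner products themselves, and one must respect signs when squaring, which is precisely what forces the two-case formulation. A secondary point worth flagging is that the upper bound is obtained by evaluating $\abs{\braket{b}{c}}$ at its guaranteed threshold $\sqrt{B}$; since the promise controls $\abs{\braket{b}{c}}$ only from below, the upper estimate should be read as the bound that results from substituting this threshold value. With these points handled, the argument is a short, self-contained application of Cauchy--Schwarz and the triangle inequality.
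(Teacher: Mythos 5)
Your argument follows the paper's proof essentially step for step: both decompose $\braket{a}{c} = \braket{b}{c} + (\bra{a}-\bra{b})\ket{c}$, bound the perturbation term by $A$ via Cauchy--Schwarz, apply the (reverse) triangle inequality, and square only after checking the sign of $\sqrt{B}-A$, which is exactly what produces the two cases. If anything, your lower bound is cleaner: you split on the sign at the end, whereas the paper threads a chain of nested absolute values (e.g.\ $\big|\abs{\braket{b}{c}} - \epsilon\big| \geq \abs{\sqrt{B} - \epsilon}$ with $\epsilon = \abs{(\bra{a}-\bra{b})\ket{c}}$) that is not valid for arbitrary $\epsilon$, although it resolves correctly in the only nontrivial case $A \leq \sqrt{B}$, where every quantity involved is nonnegative.

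The caveat you flag about the upper bound is a genuine defect, not bookkeeping, and it infects the paper's own proof. From the hypothesis one only gets $\abs{\braket{a}{c}} \leq \abs{\braket{b}{c}} + A$, and since the promise bounds $\abs{\braket{b}{c}}$ from \emph{below}, there is no way to continue to $\sqrt{B} + A$; the paper's final step, $\abs{\braket{b}{c}} + \norm{\ket{a}-\ket{b}} \leq \abs{\sqrt{B}+A}$, silently uses the reversed inequality $\abs{\braket{b}{c}} \leq \sqrt{B}$. Concretely, take $\ket{c} = \ket{b}$ and $\norm{\ket{a}-\ket{b}} \leq A$ with $A$ and $B$ small: then $\abs{\braket{a}{c}}^2 \geq (1-A)^2$, which exceeds $(\sqrt{B}+A)^2$ whenever $1 - 2A > \sqrt{B}$, so the stated upper bound is false under the hypothesis as written. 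It becomes true only under the reading $\abs{\braket{b}{c}}^2 = B$ exactly, which is how you propose to interpret it. This costs the paper nothing downstream --- the applications (the state-overlap step of Theorem~\ref{thrm:G6LSH-BPP-hard} and the locality reduction) invoke only the lower bound $(\sqrt{\delta}-\varepsilon)^2$ --- but the lemma itself should either drop the upper inequality or tighten the hypothesis to equality, and your write-up is the more honest of the two on this point.
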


For the proof, see~\cref{app:proofs}.

\begin{restatable}[]{theorem}{GLSHBPPhard}
    \label{thrm:G6LSH-BPP-hard}
    The \sc{Guided $6$-Local Hamiltonian} problem for stoquastic Hamiltonians is \clw{BPP}{hard} for any $\delta \in (1/\poly{n}, 1 - 1/\poly{n})$.
\end{restatable}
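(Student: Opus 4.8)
The plan is to reduce from an arbitrary $L \in \cl{BPP}$. By~\cref{thrm:BPPq-is-BPP} I may assume $L$ is decided by an amplified CRQVC $M_{|x|}=(n,m,p,U)$ (\cref{def:CRQVC}) with $U = U_K\cdots U_1$ built from $K=\poly{n}$ gates in $\{X,\Gate{Cnot},\Gate{Toffoli}\}$ and statistics $1-2^{-f(n)}$ versus $2^{-f(n)}$. I would apply the Feynman--Kitaev construction with a unary clock to form $H = H_{\rm in}+H_{\rm clock}+H_{\rm prop}+H_{\rm out}$ and, following~\cite{BDOT06}, use a Schrieffer--Wolff transformation to pass to the final $6$-local effective Hamiltonian on the valid-clock subspace (locality is governed by the $3$-local Toffoli together with the clock bookkeeping). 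Stoquasticity is the structural payoff of restricting to reversible gates: each $U_t$ is a $0/1$ permutation matrix, so the hopping terms $-U_t\otimes\ketbra{t}{t-1}+\mathrm{h.c.}$ have non-positive off-diagonal entries, the $\ket{+}$-initialisation penalties $\tfrac12(\mathbb{I}-X)$ likewise contribute $-\tfrac12 X$, and every remaining term is diagonal. Standard Kitaev estimates then give $\lambda_0(H)\le a$ in the \sc{yes} case and $\lambda_0(H)\ge b$ with $b-a\ge1/\poly{n}$ (after rescaling to $\norm{H}\le1$) in the \sc{no} case, while perturbation theory bounds the true ground state $\ket{g}$ by $\norm{\ket{g}-\ket{\eta}}\le A$ with $A$ exponentially small in $f(n)$, where $\ket{\eta}$ is the history state.

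The core of the argument is exhibiting the guiding state. I would pad the circuit with $L\le\poly{n}$ leading identity gates, so that for every clock value $t\le L$ the propagated register is exactly $\ket{x,0^m,+^p}$, and set
\[
\ket{\zeta} \;=\; \ket{x}\ket{0^m}\otimes\ket{+^p}\otimes\frac{1}{\sqrt{L+1}}\sum_{t=0}^{L}\ket{1^t 0^{K-t}}.
\]
By~\cref{prop:tensor-product-subset-states} this is a semi-classical encoded subset state --- the $\ket{x,0^m}$ and clock factors are plain subset states, while $\ket{+^p}$ is produced by single-qubit Hadamard isometries (\cref{def:semi-classical-encoded-subset-state}) --- so it has an efficient classical description and admits sample-query access through~\cref{prop:sample-from-subset-state}. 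Because distinct clock slices are orthogonal and $U_t\cdots U_1\ket{x,0^m,+^p}=\ket{x,0^m,+^p}$ for $t\le L$, a direct computation gives $\abs{\braket{\zeta}{\eta}}^2 = (L+1)/(K+1)=:B$.

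To certify the guiding promise I would invoke~\cref{lma:norm-tracking} with $\ket{a}=\ket{g}$, $\ket{b}=\ket{\eta}$, $\ket{c}=\ket{\zeta}$: from $\norm{\ket{g}-\ket{\eta}}\le A$ and $\abs{\braket{\eta}{\zeta}}^2=B$ it yields $(\sqrt{B}-A)^2\le\norm{\Pi_0\ket{\zeta}}^2\le(\sqrt{B}+A)^2$, pinning the overlap to $B$ up to the negligible correction $A$. Tuning is then purely combinatorial: with leading padding $L$ and trailing identity padding fixing the total length $K$, the ratio $B=(L+1)/(K+1)$ can be set to any prescribed value, so for a target $\delta$ I pick $B$ slightly above $\delta$ and obtain $\norm{\Pi_0\ket{\zeta}}^2\ge\delta$. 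The admissible window is exactly $\delta\in(1/\poly{n},1-1/\poly{n})$: the lower bound because $K=\poly{n}$ forces $B\ge1/(K+1)$, and the upper bound because efficiency caps $L\le\poly{n}$ while $K-L\ge1$ is needed for the actual computation, so $1-B\ge1/\poly{n}$. This produces a valid instance of~\cref{def:GLHP} whose \sc{yes}/\sc{no} answer decides $L$.

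The step I expect to be the main obstacle is controlling the Schrieffer--Wolff transformation: I must verify that passing to the effective low-energy Hamiltonian preserves both stoquasticity and $6$-locality and leaves the history state --- and hence the overlap $\abs{\braket{\zeta}{\eta}}^2$ --- essentially unchanged, and that the inverse-polynomial promise gap $b-a$ survives simultaneously the transformation and the identity padding, since the padding inflates $K$ and the Kitaev soundness bound degrades polynomially in $K$. Balancing these competing polynomial factors --- enough padding to realise the target $\delta$, yet a short enough clock to retain an $\Omega(1/\poly{n})$ gap --- is the delicate part; by comparison the stoquasticity of the bare construction and the subset-state structure of $\ket{\zeta}$ follow routinely from the earlier lemmas.
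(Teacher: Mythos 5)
Your overall architecture — Feynman--Kitaev on a CRQVC, leading identity padding, a partial history state as the semi-classical encoded subset guiding state, and \cref{lma:norm-tracking} to transfer the overlap from $\ket{\eta}$ to the true ground state — matches the paper, and your tensor-product guiding state $\ket{x,0^m}\otimes\ket{+^p}\otimes\frac{1}{\sqrt{L+1}}\sum_{t\le L}\ket{1^t0^{K-t}}$ (Hadamard isometries for $\ket{+^p}$, plus \cref{prop:tensor-product-subset-states}) is a clean, valid instance of what the paper obtains via \cref{lma:partial-history-SCESS}. The genuine gap is at the step you state as ``perturbation theory bounds the true ground state by $\norm{\ket{g}-\ket{\eta}}\le A$ with $A$ exponentially small in $f(n)$.'' For your unit-scale Hamiltonian $H=H_{\rm in}+H_{\rm clock}+H_{\rm prop}+H_{\rm out}$ this is true \emph{only in the \sc{yes} case}: writing $\gamma=\Omega(1/K^3)$ for the spectral gap of the penalty-free part $\hat{H}$, one has $1-\abs{\braket{\eta}{g}}^2\le\lambda_0(H)/\gamma$, which is exponentially small when $\lambda_0\le 2^{-f(n)}/(K+1)$, but becomes vacuous in the \sc{no} case, where $\lambda_0$ is itself inverse-polynomial and comparable to or larger than $\gamma$. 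Indeed $H_{\rm out}$ has norm $1$, which dwarfs the gap $\gamma$, so it is a \emph{strong} perturbation of $\hat{H}$ and the ground state of $H$ has no reason to stay near $\ket{\eta}$ in \sc{no} instances. Since the promise $\norm{\Pi_0\ket{\zeta}}^2\ge\delta$ in \cref{def:GLHP} must hold for both \sc{yes} and \sc{no} instances, your reduction may output \sc{no}-side objects violating the promise, and hence does not decide $L$.

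The paper closes exactly this hole by rescaling before adding the output penalty: the instance handed to the verifier is $\widetilde{H}=\Delta\hat{H}+\hat{H}_{\rm out}$ with $\Delta\ge 112\hat{K}^3$, so that $\hat{H}_{\rm out}$ is a weak perturbation relative to the amplified gap $\Delta\gamma=\Omega(1)$. First-order Schrieffer--Wolff arguments then give, \emph{in both cases}, $\norm{\ket{\xi}-\ket{\eta}}=O(1/\poly{n})$ (tunable through $\Delta$, not exponentially small, but that is all \cref{lma:norm-tracking} needs) together with ground energy $\bra{\eta}\hat{H}_{\rm out}\ket{\eta}\pm O(1/\Delta)$, which preserves the inverse-polynomial promise gap after renormalising to $\norm{\widetilde{H}}\le 1$. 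This also dissolves what you flagged as your main obstacle: Schrieffer--Wolff is used purely as an \emph{analysis} tool for the spectrum of $\widetilde{H}$; there is no ``effective Hamiltonian'' to output, so nothing about SW preserving locality or stoquasticity needs verifying. The instance $\widetilde{H}$ is manifestly $6$-local and stoquastic, since $\Delta>0$ rescales the stoquastic $\hat{H}$ and $\hat{H}_{\rm out}$ is diagonal. With the substitution of $\widetilde{H}$ for your $H$ and ``$A=O(1/\poly{n})$, tunable via $\Delta$'' for ``$A$ exponentially small,'' your overlap bookkeeping and $\delta$-tuning go through unchanged.
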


\begin{proof}
    Let $L = (L_{\sc{yes}}, L_{\sc{no}})$ be a promise problem in \clsb{BPP}{q}.
    Define $x \in \B^n$ to be an input to the problem.
    The circuit $M_{|x|} = R_K \cdots R_1$ is a classically reversible quantum circuit that decides $x$; the total number of gates $K = p(n)$ where $n \eqqcolon |x|$.
    The circuit acts on the input $\ket{x}$ and the ancillae $\ket{0^m}\ket{+^p}$.
    If $x \in L_{\sc{yes}}$, then $\Pr[M_{|x|}(\ket{x,0^m,+^p}) = {\tt 1}] \geq \frac{2}{3}$; if $x \in L_{\sc{no}}$, then $\Pr[M_{|x|}(\ket{x,0^m,+^p}) = {\tt 1}] \leq \frac{1}{3}$.

    Now we define a $6$-local Hamiltonian as follows:
    \begin{align*}
        H &\coloneqq H_{\text{in}} + H_{\text{clock}} + H_{\text{prop}} ,\\
        H_{\text{in}} &\coloneqq \bigg(\sum_{j=1}^{n} \Pi^{(\bar{x}_j)}_j + \sum_{j = 1}^{m} \Pi^{(1)}_j + \sum_{j = 1}^{p} \Pi^{(-)}_j\bigg) \otimes \ketbra{0}{0}_{c_1}, \\
        H_{\text{clock}} &\coloneqq \sum_{t=1}^{K} \ketbra{01}{01}_{c_{t-1},c_t},\\
        H_{\text{prop}} &\coloneqq \sum_{t=1}^{K} H_{\text{prop}}^{(t)}.
    \end{align*}
    Here $\Pi^{(\bar{x}_j)}_j = I - \ketbra{x_j}{x_j}_j$, $\Pi^{(1)}_j = \ketbra{1}{1}_j$, $\Pi^{(-)}_j = \ketbra{-}{-}_j$.
    The propagation terms are defined as 
    \begin{align*}
        H_{\text{prop}}^{(1)} &\coloneqq \ketbra{00}_{c_1,c_2} + \ketbra{10}_{c_1,c_2} \\
            &\qquad- R_1\otimes\big(\ketbra{10}{00}_{c_1,c_2} + \ketbra{00}{10}_{c_1,c_2}\big),\\
        H_{\text{prop}}^{(t)} &\coloneqq \ketbra{100}_{c_{t-1},c_t,c_{t+1}} + \ketbra{110}_{c_{t-1},c_t,c_{t+1}} \\
            &\qquad - R_t\otimes\ketbra{110}{100}_{c_{t-1},c_t,c_{t+1}} \\
            &\qquad- R_t\otimes\ketbra{100}{110}_{c_{t-1},c_t,c_{t+1}}, ~ 1 < t < K\\
        H_{\text{prop}}^{(K)} &\coloneqq \ketbra{10}_{c_{T-1},c_T} + \ketbra{11}_{c_{T-1},c_T} \\
            &\qquad- R_T\otimes(\ketbra{11}{10}_{c_{T-1},c_T} + \ketbra{10}{11}_{c_{T-1},c_T}).
    \end{align*}

    $H$ is clearly $6$-local due to the structure of the clock terms coupled to Toffoli gates.
    Furthermore, it is easy to see $H$ is also stoquastic since $H_{\text{in}}$ and $H_{\text{clock}}$ are diagonal in the computational basis and the terms of the form $R_t\otimes(\dots)$ are off-diagonal but all carry a negative sign.

    \subparagraph{Pre-idling.} We show that padding the circuit $\mathcal{V}_x$ with a polynomial number of identity gates, enables us to split the history state into two components.
    For the original $K$-gate circuit, assume that the first $T = q(n)$ many gates only act non-trivially on the first $n + m$ qubits.
    We construct a new circuit $\hat{M}_{|x|}$ by padding the start of the circuit $M_{|x|}$ with $N = r(n)$ identity gates, giving a total of $\hat{K} \coloneqq K + N$ gates and the sequence $\hat{R}_{K+N} \cdots \hat{R}_1$.
    The Hamiltonian resulting from this modification follows from above.
    We denote the new Hamiltonian constructed from the pre-idled circuit as $\hat{H}$.
    Importantly, the history state is of the form $\ket{\eta} = \ket{\eta_1} + \ket{\eta_2}$, where
    \begin{align*}
        \ket{\eta_1} &\coloneqq \frac{1}{\sqrt{\hat{K}+1}} \sum_{t=0}^{N+T}\ket{\varphi_t}\ket{t},\\
        \ket{\eta_2} &\coloneqq \frac{1}{\sqrt{\hat{K}+1}} \sum_{t=N+T+1}^{\hat{K}}\ket{\varphi_t}\ket{t},
    \end{align*}
    with $\ket{\varphi_t} = \hat{R}_t\ket{\varphi_{t-1}}$, $\ket{\varphi_0} = \ket{x,0^m,+^p}$ and $\ket{t} = |1^t\,0^{\hat{K}-t}\rangle$.

    The non-degenerate ground space of $\hat{H}$ is spanned by $\ket{\eta}$ such that $\hat{H}\ket{\eta} = 0$.
    Additionally, the spectral gap of $\hat{H}$ is $\Omega(1/\hat{K}^3)$~\cite[Lemma 2.2]{gharibian2019complexity}. 

    \subparagraph{Existence of a guiding state.} Now we demonstrate the existence of a semi-classical encoded subset state $\ket{\zeta}$ that has sufficient overlap with the history state $\ket{\eta}$.
    It is clear the history state $\ket{\eta}$ is a subset state~\cite{waite2025complexityb}.
    Furthermore, the component $\ket{\eta_1}$ is a semi-classical encoded subset state by~\cref{lma:partial-history-SCESS}.

    It then follows that the semi-classical encoded subset state $\ket{\zeta}$ can be equivalent to the history state component $\ket{\eta_1}$.
    Moreover, let 
    \begin{equation}
        \ket{\zeta} \coloneqq \frac{1}{\sqrt{Q+1}} \sum_{t=0}^{Q} \ket{\varphi_t}\ket{t}, ~~\text{for}~~ 0 \leq Q \leq N+T.
    \end{equation}
    then,
    \begin{equation}
        \frac{1}{r(n) + q(n) + 1} \leq ~ \abs{\braket{\zeta}{\eta}}^2 = \frac{Q+1}{\hat{K}+1} ~ < 1.
    \end{equation}
    Rearranging slightly and letting $c \coloneqq \hat{K}-Q$ we have
    \begin{equation}
        \frac{Q+1}{\hat{K}+1} = 1 -\frac{c}{p(n)+r(n)} \eqqcolon \delta.
    \end{equation}
    Clearly $\delta \in (0,1]$; then, for sufficiently choices of $r(n)$ and $q(n)$ we have $\delta \in (1/\poly{n}, 1 - 1/\poly{n})$.

    \subparagraph{Perturbed Hamiltonian.} Define $\hat{H}_{\text{out}} = \Pi^{(0)}_1 \otimes \ketbra{1}_{c_{\hat{K}}}$ and note that the smallest eigenvalue of $\hat{H}$ is at least $1/(7\hat{K}^3)$~\cite[Lemma 2.2]{gharibian2019complexity}.
    Apply the Schrieffer-Wolff transformation on $\widetilde{H} = \Delta \hat{H} + \hat{H}_{\text{out}}$ with $\Delta \geq 112 \cdot \hat{K}^3$~\cite{bravyi2016complexity}.
    The result is that $\widetilde{H}$ as a non-degenerate ground space spanned by $\ket{\xi}$.
    From~\cite{bravyi2016complexity} we conclude 
    \begin{equation}\label{eq:history-state-distance-ground-state}
        \norm{\ket{\xi} - \ket{\eta}} = O\left(1/\poly{n}\right),
    \end{equation}
    i.e., for an appropriate choice of $\Delta$ we can bound $\norm{\ket{\xi} - \ket{\eta}}$ from above by an inverse-polynomial $\varepsilon$. 
    
    \subparagraph{State overlap.} To study the overlap between the three states $\ket{\eta}$, $\ket{\zeta}$ and $\ket{\xi}$, we use~\cref{lma:norm-tracking} with $\epsilon_{\xi\eta} = \varepsilon$ and $\delta_{\eta\zeta} = \delta$.
    We have that 
    \begin{align*}
        \abs{\braket{\xi}{\zeta}}^2 &\geq (\sqrt{\delta} - \varepsilon)^2,\\
            &\eqqcolon \kappa.
    \end{align*}
    For sufficiently chosen $\varepsilon$ we have $\kappa \in (1/\poly{n}, 1 - 1/\poly{n})$.
    In typical fashion, the appropriate choices of these parameters stem from $r(n)$, $q(n)$ and $\Delta$.

    \subparagraph{Hamiltonian energy bounds.} It can easily be shown that in each case:
    \begin{align}
        \sc{yes}&: \bra{\eta}\hat{H}_{\text{out}}\ket{\eta} \leq \displaystyle\frac{2^{-f(n)}}{\hat{K}+1}  \label{eq:Yes-case-energy}\\
        \sc{no}&: \bra{\eta}\hat{H}_{\text{out}}\ket{\eta} \geq \displaystyle\frac{1-2^{-f(n)}}{\hat{K}+1} \label{eq:No-case-energy}
    \end{align}
    Standard first-order Schrieffer-Wolf transformation arguments~\cite{bravyi2016complexity} demonstrate that in each case, the lowest eigenvalues of $\widetilde{H}$ are ${2^{-f(n)}}/({\hat{K}+1}) \pm O(1/\Delta)$ and ${(1-2^{-f(n)})}/({\hat{K}+1}) \pm O(1/\Delta)$ respectively.
    By manual renormalisation of the Hamiltonian norm, i.e., from $\|\widetilde{H}\| = O(\poly{n})$ to $\|\widetilde{H}\| \leq 1$ (to satisfy the norm conditions of~\cref{def:GLHP}), that the spectral gap of $\widetilde{H}$ is $\Omega(1/\poly{n})$.
    Combining~\cref{eq:history-state-distance-ground-state} with the uncertainty ranges of~\cref{eq:Yes-case-energy} and~\cref{eq:No-case-energy}, we can also conclude $b' - a' \geq 1/\poly{n}$, where $a'$ and $b'$ are the renormalised energy bounds.
\end{proof}

\begin{remark}
    The parameter $T$ denotes the number of classical gates acting non-trivially only on the first $n+m$ qubits.
    This quantity is determined by the circuit description and may be zero.
    The reduction and guiding-state construction in the proof of \cref{thrm:G6LSH-BPP-hard} do not require $T > 0$, as the pre-idling segment of the history state alone suffices to guarantee the required overlap promise.
\end{remark}

\subsection{Reducing the locality of the problem}
Reducing the locality of the problem can be done in a few ways.
One method is to employ more stringent clock constructions~\cite{kempe2003local, kempe2006complexity}, however, for our purposes we shall consider the method of perturbation gadgets.
A caveat to this method is ensuring the guiding state remains a semi-classical encoded subset state and the overlap with the ground state remains within the desired range.
The intuition behind this idea is outlined in Ref.~\cite{cade2023improved} where it is pointed out that the mediator qubits attached during the perturbative reduction can be identified with a tensor product of isometries from single-qubit states to $O(1)$-qubit states.
Moreover, employing the perturbative approach requires attaching a polynomial number of ancilla qubits to the ground state.
Each attached ancilla qubit is related to one gadget; hence for a set of $w$ gadget ancillae, $\{\ket{\psi_j}\}_{j\in[w]}$, we require 
\begin{equation}\label{eq:gadget-isometry-ancilla}
    \bigotimes_{j\in[w]} \, \ket{\psi_j} \eqqcolon \bigotimes_{j\in[w]} \, V_j\ket{0},
\end{equation}
such that each $\ket{\psi_j}$ is a normalised $O(1)$-qubit state, and $V_j$ is an isometry acting on a single qubit mapping $\ket{0} \mapsto \ket{\psi_j}$.
Analysing the stoquastic gadgets in Ref.~\cite{bravyi2006complexity} reveals that each required $\ket{\psi_j}$ will be of uniform amplitude; thus~\cref{eq:gadget-isometry-ancilla} is a semi-classical encoded subset state.
We denote such gadget ancillae subset states as $\ket{\Psi_{\mathcal{V}}}$.
By~\cref{prop:tensor-product-subset-states} a given guiding state $\ket{\zeta}$ attached to the relevant subset state $\ket{\Psi_{\mathcal{V}}}$ will also be a semi-classical encoded subset state.
The partial consequence of this is that the locality of the stoquastic Hamiltonian is reduced to $2$-local via \refcite{bravyi2006complexity}.
In order to prove~\cref{thrm:G2LSH-BPP-hard} it then suffices to show the overlap between the guiding state and the ground state remains within the desired range.

The perturbative approach can be achieved by successive applications of the Schrieffer-Wolff transformation.
The transformation will result in some Hamiltonian $H_B$ simulating a target Hamiltonian $H_A$ in the sense that their low-energy spectrums are close --- as alluded to in~\cref{sec:preliminaries}.
In addition to having close low-energy spectrums, it has been shown the low-energy states of $H_A$ and $H_B$ are close~\cite{bravyi2016complexity}.
We restate this informally as follows:

\begin{lemma}[\cite{bravyi2016complexity}]
    Let $H$ have a non-degenerate ground state $\ket{\phi_0}$ with a spectral gap $\Delta$.
    Let $\widetilde{H}$ be a $(\eta,\epsilon)$-simulator for $H$ such that $\epsilon < \Delta/2$.
    Then $\widetilde{H}$ has a non-degenerate ground state $|\widetilde{\phi}_0\rangle$ such that 
    \begin{equation}
        \norm{\mathcal{E}\!\ket{\phi_0} - |\widetilde{\phi}_0\rangle} = O(1/\poly{n}),
    \end{equation}
    when $\eta, \epsilon = O(1/\poly{n})$.
\end{lemma}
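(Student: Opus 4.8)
The plan is to reduce the statement to a spectral-projector perturbation bound and then invoke the Davis--Kahan $\sin\Theta$ theorem. First I would unpack the $(\eta,\epsilon)$-simulator definition: there is an encoding $\mathcal{E}$ with $\norm{\mathcal{E}(\widetilde{H}) - H} \le \eta$, while the low-energy eigenvalues satisfy $\abs{\lambda_j(\widetilde{H}) - \lambda_j(H)} \le \epsilon$. Since $\mathcal{E}$ is built from an isometry, it carries the ground state $\ket{\widetilde{\g}}$ of $\widetilde{H}$ to a normalised state in the same space as $\ket{\g}$, so it suffices to compare the two rank-one ground projectors $P = \ketbra{\g}$ and $\widetilde{P} \coloneqq \mathcal{E}(\ketbra{\widetilde{\g}})$ after this identification. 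Before that, I would record non-degeneracy: because $\epsilon < \Delta/2$, the two smallest eigenvalues of $\widetilde{H}$ obey $\lambda_1(\widetilde{H}) - \lambda_0(\widetilde{H}) \ge \Delta - 2\epsilon > 0$, so $\ket{\widetilde{\g}}$ is well defined and isolated from the rest of the spectrum by a gap of order $\Delta$.

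The core step is the projector bound. Writing $\mathcal{E}(\widetilde{H}) = H + E$ with $\norm{E} \le \eta$ on the relevant subspace, I would apply the Davis--Kahan $\sin\Theta$ theorem (equivalently, the resolvent contour integral $P = \tfrac{1}{2\pi i}\oint (z-H)^{-1}\,dz$ taken around $\lambda_0(H)$) to obtain
\begin{equation*}
    \norm{P - \widetilde{P}} = O\!\left(\frac{\eta}{\Delta}\right),
\end{equation*}
where the gap $\Delta$ separating $\lambda_0$ from the remaining spectrum is exactly what controls the denominator. A final bookkeeping step converts this into a statement about the vectors: for rank-one projectors one has $\norm{\ket{\g} - \ket{\widetilde{\g}}} \le \sqrt{2}\,\norm{P - \widetilde{P}}$ once the global phase is fixed optimally, so $\norm{\ket{\g} - \ket{\widetilde{\g}}} = O(\eta/\Delta)$. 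Choosing $\eta$ (and $\epsilon < \Delta/2$) inverse-polynomially small while using the standing assumption $\Delta = \Omega(1/\poly{n})$ then yields $\norm{\ket{\g} - \ket{\widetilde{\g}}} = O(1/\poly{n})$, which is the claim.

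I expect the main obstacle to be the careful handling of the encoding $\mathcal{E}$ rather than the perturbation inequality itself. The states $\ket{\g}$ and $\ket{\widetilde{\g}}$ nominally live in Hilbert spaces of different dimension, so the comparison only makes sense after pulling $\ket{\widetilde{\g}}$ back through $\mathcal{E}$; one must check that $\mathcal{E}$ acts (near-)isometrically on the low-energy sector, so that norms and gaps are preserved up to the $\eta$ error, and that the ideal encoding and the true low-energy projector of $\widetilde{H}$ agree to within $O(\eta)$. An equivalent and perhaps cleaner route, consistent with the Schrieffer--Wolff machinery used elsewhere in this work, is to observe that the Schrieffer--Wolff rotation $e^{S}$ block-diagonalises $\widetilde{H}$ with $\norm{S} = O(\eta/\Delta)$ and maps the encoded ground state onto $\ket{\widetilde{\g}}$; then $\norm{\ket{\g} - \ket{\widetilde{\g}}} \le \norm{e^{-S} - \mathbb{I}} + O(\eta) = O(\norm{S}) + O(\eta) = O(1/\poly{n})$, avoiding a direct appeal to Davis--Kahan.
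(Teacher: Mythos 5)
Your proposal is sound, but be aware that the paper itself does not prove this lemma at all: it is imported (and deliberately stated informally) from Ref.~\cite{BH16}, with the surrounding text even conceding that the encoding required on $\ket{\g}$ has been neglected. So the comparison is between your self-contained argument and the Schrieffer--Wolff analysis of \cite{BH16} that the paper cites as a black box. Your route --- eigenvalue closeness giving non-degeneracy via $\Delta - 2\epsilon > 0$, a Davis--Kahan/resolvent bound $\norm{P - \widetilde{P}} = O(\eta/\Delta)$, and the rank-one conversion $\norm{\ket{\g} - \ket{\widetilde{\g}}} \leq \sqrt{2}\,\norm{P - \widetilde{P}}$ after fixing the phase --- is a legitimate and more elementary derivation, and it correctly isolates the one genuine subtlety: $\ket{\g}$ and $\ket{\widetilde{\g}}$ live in Hilbert spaces of different dimension, so the claimed norm only makes sense after pulling back through the encoding, and one must invoke the part of the simulation definition guaranteeing that the low-energy projector of $\widetilde{H}$ is $O(\eta)$-close to the range of the encoding isometry (otherwise $\mathcal{E}(\ketbra{\widetilde{\g}})$ need not be a projector and $V^\dagger\ket{\widetilde{\g}}$ is sub-normalised, so Davis--Kahan does not directly apply). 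The Schrieffer--Wolff alternative you sketch at the end is essentially the proof strategy of \cite{BH16}, and it is the one that meshes with how the lemma is used in this paper, where the ``encoding'' is concretely the attachment of gadget ancillae during the perturbative locality reductions. In short: Davis--Kahan buys self-containedness and an explicit $O(\eta/\Delta)$ dependence on the gap; the Schrieffer--Wolff route buys direct compatibility with the iterated gadget constructions the paper actually performs, which is presumably why the authors cite rather than reprove it.
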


The inclusion of the encoding on the ground state $\ket{\phi_0}$ is essentially the `attachment of gadget ancillae' qubits discussed above.
A simple application of~\cref{lma:norm-tracking} with $A \coloneqq O(1/\poly{n})$ and $B \coloneqq \delta$ shows 
\begin{equation}
    \abs{\bra{\zeta,\Psi_{\mathcal{V}}}{\widetilde{\phi}_0}\rangle}^2 \in (1/\poly{n}, 1 - 1/\poly{n}).
\end{equation}
Intuitively this makes sense as we expect the distance between the guiding state and the ground state to be small.
The following theorem summarises the results of this section.

\begin{restatable}[]{theorem}{GtwoLSHBPPhard}
    \label{thrm:G2LSH-BPP-hard}
    The \sc{Guided $2$-Local Hamiltonian} problem for stoquastic Hamiltonians is \clw{BPP}{hard} for any $\delta \in (1/\poly{n}, 1 - 1/\poly{n})$.
\end{restatable}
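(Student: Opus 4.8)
The plan is to reduce \cref{thrm:G6LSH-BPP-hard} to \cref{thrm:G2LSH-BPP-hard} by applying the stoquastic perturbation gadgets of \refcite{BDOT06} to the $6$-local Hamiltonian $\widetilde{H}$ produced in the proof of \cref{thrm:G6LSH-BPP-hard}, and then verifying that the three properties required by \cref{def:GLHP} survive the reduction: the Hamiltonian remains stoquastic, the guiding state remains a semi-classical encoded subset state, and the overlap with the new ground state stays in the interval $(1/\poly{n},\,1-1/\poly{n})$. The first property is inherited directly from the construction in \refcite{BDOT06}, whose gadgets are designed to output stoquastic terms; the second follows from the discussion above, since each gadget contributes a mediator ancilla $\ket{\psi_j}$ of uniform amplitude, so the attached state $\ket{\Psi_{\mathcal{V}}}$ is a semi-classical encoded subset state, and by \cref{prop:tensor-product-subset-states} the full guiding state $\ket{\zeta,\Psi_{\mathcal{V}}}$ remains one.

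The substantive work lies in the third property. First I would invoke the $(\eta,\epsilon)$-simulator lemma of \refcite{BH16} (restated above) to conclude that the perturbed $2$-local stoquastic Hamiltonian $H_B$ has a non-degenerate ground state $\ket{\widetilde{\g}}$ satisfying $\norm{\ket{\g} - \ket{\widetilde{\g}}} = O(1/\poly{n})$, where $\ket{\g}$ is the (encoded) ground state of the $6$-local Hamiltonian. Here one must take care to track the encoding isometry that embeds $\ket{\g}$ into the enlarged Hilbert space, i.e. the attachment of the gadget ancillae, so that the comparison is made between states of the same dimension. Then I would apply \cref{lma:norm-tracking} with $A \coloneqq O(1/\poly{n})$ the simulator distance and $B \coloneqq \delta$ the overlap guaranteed by \cref{thrm:G6LSH-BPP-hard}, obtaining $\abs{\braket{\zeta,\Psi_{\mathcal{V}}}{\widetilde{\g}}}^2 \in (1/\poly{n}, 1 - 1/\poly{n})$, which is exactly the required promise on the guiding state.

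I expect the main obstacle to be controlling the compounding of error across the successive Schrieffer-Wolff transformations. Each gadget reduction introduces its own simulator parameters $(\eta,\epsilon)$, and to drive the locality from six to two one may need several layers; ensuring that the accumulated perturbative error stays $O(1/\poly{n})$ while simultaneously keeping $\epsilon < \Delta/2$ relative to the (shrinking) spectral gap $\Delta = \Omega(1/\poly{n})$ requires choosing the gadget coupling strengths as sufficiently large polynomials in $n$ and checking that the norm of the final Hamiltonian, after renormalisation to $\norm{H_B}\leq 1$, still leaves a promise gap $b-a \geq 1/\poly{n}$. Provided these polynomial budgets are balanced, the spectral-gap, energy-bound, and overlap conditions all hold simultaneously, and the \clw{BPP}{hardness} transfers to the $2$-local stoquastic setting, completing the proof of \cref{thrm:G2LSH-BPP-hard}.
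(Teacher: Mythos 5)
Your proposal follows essentially the same route as the paper's argument: apply the stoquastic gadgets of \refcite{BDOT06} to the $6$-local Hamiltonian, note that the uniform-amplitude mediator ancillae $\ket{\Psi_{\mathcal{V}}}$ keep the guiding state a semi-classical encoded subset state via \cref{prop:tensor-product-subset-states}, and combine the $(\eta,\epsilon)$-simulator ground-state closeness bound of \refcite{BH16} with \cref{lma:norm-tracking} (taking $A = O(1/\poly{n})$, $B = \delta$) to preserve the overlap promise. Your extra attention to error accumulation over successive Schrieffer--Wolff layers and to the post-renormalisation promise gap is a sound elaboration of details the paper treats only implicitly, not a departure from its method.
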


\subsection{Geometrical restrictions}
Similar to reduction in locality, it is possible to construct a sequence of gadgets that enforce geometrical restrictions on the Hamiltonian.
This can be achieved by transforming standard \clsb{BPP}{q} circuits to ones that are \emph{spatially sparse}.
Spatial sparsity loosely refers to the idea that each qubit in the circuit is only acted upon by a small, constant number of gates.
Following the work in Ref.~\cite{waite2025complexitya} and analysing the gadgets, it is clear that similar arguments can be applied to show the \sc{Guided Local Hamiltonian} problem for stoquastic Hamiltonians is \clw{BPP}{hard} when restricted to a square lattice. 

\begin{restatable}[]{theorem}{GsquareLSHBPPhard}
    \label{thrm:GsquareLSH-BPP-hard}
    The \sc{Guided $2$-Local Hamiltonian} problem for stoquastic Hamiltonians on a square lattice is \clw{BPP}{hard} for any $\delta \in (1/\poly{n}, 1 - 1/\poly{n})$.
\end{restatable}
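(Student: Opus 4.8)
The plan is to mirror the locality reduction used to prove \cref{thrm:G2LSH-BPP-hard}, but now equipping the \cl{BPP}\textsubscript{q} circuit with a two-dimensional layout before the circuit-to-Hamiltonian step and closing with the geometric perturbation gadgets of \refcite{WB24}. First I would \emph{spatially sparsify} the classically reversible circuit $M_{|x|}$: embed its qubits on a grid and insert reversible routing operations so that every qubit is touched by only a constant number of gates and every gate acts on geometrically adjacent qubits. Since a \Gate{Swap} decomposes into three \Gate{Cnot}s, all routing gates remain in the admissible set $\{X,\Gate{Cnot},\Gate{Toffoli}\}$; the circuit therefore stays a valid CRQVC (\cref{def:CRQVC}) with only a polynomial blow-up in the gate count $K$, and the equivalence of \cref{thrm:BPPq-is-BPP} is unaffected. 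Applying the Feynman--Kitaev construction to the sparsified circuit, together with a geometrically local clock as in \refcite{WB24}, yields a stoquastic history-state Hamiltonian whose terms are spatially sparse on the lattice.

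The second step is to run the geometric perturbation gadgets of \refcite{WB24} to bring the Hamiltonian down to $2$-local interactions on a square lattice while preserving stoquasticity. As in the locality reduction, each gadget attaches a constant-size mediator register to the ground state; the crucial point to verify is that the induced mediator ground state $\ket{\psi_j}$ has \emph{uniform amplitudes}, so that $\ket{\psi_j} = V_j\ket{0}$ for an $O(1)$-qubit isometry $V_j$ and the collection $\ket{\Psi_{\mathcal{V}}} = \bigotimes_{j} V_j\ket{0}$ is a semi-classical encoded subset state (\cref{def:semi-classical-encoded-subset-state}). Granting this, \cref{prop:tensor-product-subset-states} shows that the guiding state $\ket{\zeta}$ from \cref{thrm:G6LSH-BPP-hard} tensored with $\ket{\Psi_{\mathcal{V}}}$ is again a semi-classical encoded subset state, so the guiding input meets the requirements of \cref{def:GLHP}.

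Finally, I would control the overlap. Each round of gadgets replaces the target Hamiltonian by an $(\eta,\epsilon)$-simulator, and by the low-energy-state closeness of \refcite{BH16} the simulator ground state $\ket{\widetilde{\g}}$ satisfies $\norm{\ket{\g}-\ket{\widetilde{\g}}} = O(1/\poly{n})$. Invoking \cref{lma:norm-tracking} with $A \coloneqq O(1/\poly{n})$ and $B \coloneqq \delta$ then gives $\abs{\braket{\zeta,\Psi_{\mathcal{V}}}{\widetilde{\g}}}^2 \in (1/\poly{n},\,1-1/\poly{n})$, as required. The energy-bound and renormalisation arguments carry over from \cref{thrm:G6LSH-BPP-hard}, since the yes/no separation of the simulator inherits that of the original stoquastic Hamiltonian up to $O(1/\Delta)$ corrections.

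The main obstacle I anticipate is the joint requirement that the \refcite{WB24} gadgets be simultaneously \emph{geometry-enforcing} (planar, nearest-neighbour), \emph{stoquasticity-preserving}, and \emph{uniform-amplitude} on the mediator ground space. Each property is individually available, but confirming they hold for the \emph{same} gadget sequence --- so that the guiding state remains a semi-classical encoded subset state throughout the composition --- is the delicate part. In particular, one must check that spatial sparsification does not disturb the split of the history state into $\ket{\eta_1}$ and $\ket{\eta_2}$ exploited in \cref{thrm:G6LSH-BPP-hard}: the inserted routing gates must leave the coin-flip ancillae untouched during the pre-idling window, so that the hypothesis of \cref{lma:partial-history-SCESS} (gates supported only on identity-isometry qubits) survives and $\ket{\eta_1}$ is still a semi-classical encoded subset state.
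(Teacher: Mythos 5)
Your proposal follows essentially the same route as the paper: the paper's own argument for \cref{thrm:GsquareLSH-BPP-hard} is exactly to spatially sparsify the \cl{BPP}\textsubscript{q} circuit and then apply the stoquasticity-preserving geometric gadgets of \refcite{WB24}, reusing the locality-reduction arguments (semi-classical encoded subset guiding state via \cref{prop:tensor-product-subset-states}, overlap control via \cref{lma:norm-tracking}). In fact your write-up is considerably more detailed than the paper's brief sketch, and the caveats you flag (simultaneous geometry/stoquasticity/uniform-amplitude properties of the gadgets, and preserving the pre-idling split) are precisely the points the paper leaves implicit.
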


\section{Multi-chotomy Theorem and Completeness}
Cubitt and Montanaro's tetrachotomy theorem~\cite{cubitt2016complexity} captures the complexity classifications of the \sc{Local Hamiltonian} problem for families of interactions generated by a set $\mathcal{M}$.
For a given set of interactions $\mathcal{M}$, the complexity of the corresponding local Hamiltonian family falls into one of four categories: \cl{P}, \clw{NP}{complete}, \clw{StoqMA}{complete}, or \clw{QMA}{complete}.
Kallaugher \emph{et al}.~\cite{kallaugher2024complexity} made a similar classification for the \sc{Product State Local Hamiltonian} problem.
In this the case the result was a dichotomy theorem showcasing that the problem is in \cl{P} if $\mathcal{M}$ is a any $1$-local family, otherwise it is \clw{NP}{complete}.
In an analogous manner, we can extend this idea to the \sc{Guided Local Hamiltonian} problem.
A large class of Hamiltonian families that are typically \clw{QMA}{complete} for the \sc{Local Hamiltonian} problem have been shown to be \clw{BQP}{hard} for the \sc{Guided Local Hamiltonian} problem~\cite{cade2023improved}.
This sets some initial groundwork for a multi-chotomy theorem for the \sc{Guided Local Hamiltonian} problem.
It is trivial to see that the Hamiltonian families typically in \cl{P} will stay in \cl{P} for the \sc{Guided Local Hamiltonian} problem.
The question becomes interesting when considering the \clw{NP}{complete} and \clw{StoqMA}{complete} families.
Since the \clw{NP}{complete} families are classical systems, it remains unclear whether the \cl{BPP} algorithm of Ref.~\cite{gharibian2023dequantizing} suffices to resolve the \sc{Guided Local Hamiltonian} problem variant in a broader sense.
One useful attribute of a subset of the \clw{NP}{complete} Hamiltonian families is that they have computational basis product state ground states, e.g., the Ising model.
We use this fact when discussing our conjecture in a forthcoming subsection.

\subsection{Completeness of local stoquastic Hamiltonians}
We discuss the challenge of proving \emph{completeness} for the \sc{Guided Local Hamiltonian} problem under stoquastic constraints.
While we have established \clw{BPP}{hardness}, completeness requires containment in \cl{BPP}, which is non-trivial due to the problem's reliance on guiding states.
Classical algorithms for ground-state energy estimation typically exploit sample-query access to the guiding state, thereby avoiding quantum state preparation.
However, known results either impose restrictive assumptions or fail to align with the promises specific to our setting.

Ref.~\cite{gharibian2023dequantizing} shows that the \sc{Guided Local Hamiltonian} problem is classically tractable for stoquastic Hamiltonians when both the precision and guiding state overlap are constant.
Their algorithm succeeds with high probability using only efficient sample-query access, without requiring quantum preparation.
Ref.~\cite{zhang2024dequantized} improves upon this by relaxing the norm bound to polynomial in $n$, but introduces two strong constraints: 
\begin{inparaenum}[(i)]
    \item the guiding state must be preparable in constant depth, and
    \item the algorithm requires knowledge of an interval containing the ground-state energy, a promise not provided in the standard problem formulation.
\end{inparaenum}
To address the second constraint, we prove in \cref{app:classical-algorithms} that the cluster expansion methods of Ref.~\cite{mann2024algorithmic} for approximating partition functions of local Hamiltonians can be adapted to decide the \sc{Guided Local Hamiltonian} problem directly, without requiring an energy interval.
We show that when the guiding state overlap, promise gap, and spectral gap are all constant, and the guiding state is preparable in constant-depth with geometrically local gates, there exists a deterministic polynomial-time classical algorithm that decides the problem.
This simplifies and clarifies the approach of Ref.~\cite{zhang2024dequantized}, though it applies only to a restricted special case of \cref{thrm:G6LSH-BPP-hard} (and related general local Hamiltonian results) where these stringent conditions are met.
Note that we still maintain the Hamiltonian norm scaling of $O(\poly{n})$ in this setting, as the cluster expansion method does not require renormalisation to a constant norm.

The algorithm of \citet{stroeks2022spectral} offers another classical approach, capable of learning a constant number of eigenvalues for local stoquastic Hamiltonians with polynomial spectral gaps, assuming access to a state overlapping with a constant number of eigenstates.
While promising, their method relies on two strong assumptions absent in our problem:
\begin{inparaenum}[(i)]
    \item a known lower bound on eigenvalue spacing, and
    \item a bound on the number of eigenstates overlapped by the guiding state.
\end{inparaenum}
Our reduction (from \cref{thrm:G6LSH-BPP-hard}) only guarantees overlap with the ground state.
Moreover, the excited-state structure for Hamiltonians from the Feynman-Kitaev construction remains poorly understood, even more so for perturbed Hamiltonians arising from such constructions.
Hence, the method of Ref.~\cite{stroeks2022spectral} does not clearly apply.

The procedure of \citet{bravyi2015monte} --- a classical analogue of Quantum Phase Estimation --- requires:
\begin{inparaenum}[(a)]
    \item an inverse-polynomial \emph{pointwise} correlation between the guiding state and ground state, and
    \item an efficient classical algorithm for evaluating guiding state amplitudes.
\end{inparaenum}
While our guiding state exhibits at least $1/\poly{n}$ \emph{global} overlap, this does not imply the required pointwise correlation.
Unless the algorithm in \cite{bravyi2015monte} can be adapted to use global overlap alone, it is unsuitable for our setting.
Though, if the given guiding states are indeed pointwise correlated and the problem definition appropriate modified, the algorithm can be applied allowing for a \clw{BPP}{complete} problem.

The Markov Chain technique used to show containment of the \sc{Frustration-Free Local Stoquastic Hamiltonian} problem in \cl{MA} does not suffice here: the ground state may not exhibit low local energy, making local checks ineffective.
Similarly, extensions of Gillespie's Algorithm~\cite{bravyi2023rapidly} also fail, as we lack bounds on or access to eigenstate amplitudes.
For instance, the guiding state $\ket{\zeta} = \sqrt{\delta}\ket{\phi_0} + \sum_{j>0}c_j\ket{\phi_j}$ yields:
\begin{equation*}
    \lambda_0 = \frac{1}{\sqrt{\delta}}\left(\sum_{y} \mel{x}{H}{y}\frac{\braket{y}{\zeta}}{\braket{x}{\phi_0}} - \sum_{j>0}{c_j\lambda_j}\frac{\braket{x}{\phi_j}}{\braket{x}{\phi_0}}\right),
\end{equation*}
but since amplitudes of the eigenstates are unknown, it is unclear how to compute $\lambda_0$, even when $x$ lies in the support of $\ket{\zeta}$.

Given these limitations, \clw{BPP}{completeness} appears unlikely, motivating the search for an intermediate complexity class \cl{X} between \cl{BPP} and \cl{BQP}.
The \emph{Fourier Hierarchy} \cl{FH}~\cite{shi2005quantum}, consisting of classes \cl{FH$_k$} that permit bounded-error quantum circuits with at most $k$ Fourier transforms, offers a potential framework.
Although \cl{FH$_2$} includes Kitaev's Phase Estimation, applying it here is obstructed by the preparation of the guiding state --- especially under stoquastic constraints, where the Perron-Frobenius theorem limits available phases.
Alternatively, one might define a class extending \clsb{BPP}{q} using single-qubit phase states (e.g., $\ket{\phi} = \alpha\ket{0} + \beta e^{i\phi}\ket{1}$), but this sacrifices stoquasticity, since $I - \ketbra{\phi}{\phi}$ is not stoquastic in general.

In the absence of viable classical algorithms, we conjecture that the problem is \cl{X}-complete for some intermediate class \cl{X}.
\Cref{fig:complexity-classes} illustrates its conjectured position in the complexity landscape.
Gate restrictions (e.g., $\{I, X, \Gate{Cnot}, \Gate{Toffoli}\}$) likely prevent a reduced Quantum Phase Estimation procedure from achieving sufficient accuracy.
Nonetheless, advice-based complexity classes,\footnote{Though relevant only if such classes are not as powerful as \cl{BQP}.} or methods such as imaginary-time evolution~\cite{stroeks2022spectral, zhang2024dequantized} combined with cluster expansion~\cite{mann2024algorithmic}, may provide a path forward.
The latter may yield near-optimal bounds in stoquastic cases.

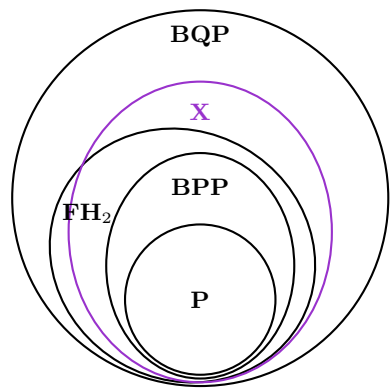
\begin{figure}
    \centering
    \begin{tikzpicture}
        \draw[fill=none, draw=black, thick] (0,1.35) ellipse (2.5 and 2.5);
        \draw[fill=none, draw=black, thick] (0,0.45) ellipse (1.25 and 1.5);
        \draw[fill=none, draw=black, thick] (0,0) circle (1);
        \draw[fill=none, draw=black, thick, rotate around={-30:(-0.2,0.45)}] (-0.3,0.55) ellipse (1.8 and 1.65);
        \draw[fill=none, draw=myViolet, thick] (0,0.9) ellipse (1.75 and 2);
        \node at (0,0) {\cl{P}};
        \node at (0,1.5) {\cl{BPP}};
        \node at (0,3.5) {\cl{BQP}};
        \node at (-1.5,1.15) {\cl{FH$_2$}};
        \node[myViolet] at (0,2.5) {\cl{X}};
    \end{tikzpicture}
    \caption{A diagram of the complexity classes discussed in this work. The conjectured class is denoted by \cl{X} (in violet).}
    \label{fig:complexity-classes}
\end{figure}

\subsection{Duality conjecture}
We conjecture the following trichotomy result for the \sc{Guided Local Hamiltonian} problem.
We are inclined to mention that the table below is not a one-to-one correspondence type argument, but rather lies with similar arguments from Ref.~\cite{cade2023improved}.

\begin{conjecture}
    Let $\mathcal{M}$ be a fixed set of $2$-local interactions.
    There exists a duality between the complexity classification of the local Hamiltonian family generated by \sc{$\mathcal{M}$-Hamiltonian} problem and the \sc{Guided $\mathcal{M}$-Hamiltonian} problem.
    The following table summarises the conjectured duality.
    \begin{table}[!ht]
        \centering
        \setcellgapes{2pt}
        \makegapedcells
        \begin{tabular}{cc}
            \hline
            \makecell{\sc{$\mathcal{M}$-Hamiltonian}} & \makecell{\sc{Guided $\mathcal{M}$-Hamiltonian}} \\\hline\hline
            \makecell{\cl{P}} &\makecell{ \cl{P}} \\
            \makecell{\clw{NP}{complete}} & \makecell{\cl{P}} \\
            \makecell{\clw{StoqMA}{complete}} & \makecell{\clw{X}{complete}}\\
            \makecell{\clw{QMA}{complete}} & \makecell{\clw{BQP}{complete}}\\
            \hline
        \end{tabular}
        \caption{
            A conjectured duality between the complexity classifications of the \sc{$\mathcal{M}$-Hamiltonian} and \sc{Guided $\mathcal{M}$-Hamiltonian} problems.
        }
        \label{tab:duality-conjecture}
    \end{table}
\end{conjecture}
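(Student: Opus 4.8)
The plan is to prove the duality row by row, using the Cubitt--Montanaro tetrachotomy~\cite{CM16} as the case split, and in each row augmenting the reduction that witnesses the \emph{standard} classification with explicit tracking of the guiding state as a semi-classical encoded subset state. The unifying principle is that every simulation or perturbative reduction underlying~\cite{CM16} must be shown to carry a valid guide for the source family to a valid guide for the target family, with overlap preserved in the range $(1/\poly{n}, 1 - 1/\poly{n})$ --- exactly the bookkeeping already performed in \cref{lma:partial-history-SCESS} and in the perturbation-gadget analysis behind \cref{thrm:G2LSH-BPP-hard}, where \cref{lma:norm-tracking} controls how overlaps degrade under each step.

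The first two rows are handled by direct classical algorithms. For the \cl{P} row, containment is immediate: any algorithm solving the unguided \sc{$\mathcal{M}$-Hamiltonian} problem in \cl{P} also solves the guided variant by discarding the guide, so \sc{Guided $\mathcal{M}$-Hamiltonian}~$\in \cl{P}$. For the \clw{NP}{complete} row I would use the structural fact that, by~\cite{CM16}, this case is precisely the \emph{classical} one: every interaction in $\mathcal{M}$ is simultaneously diagonalisable by a known product of single-qubit unitaries $U = \bigotimes_j U_j$, so the ground state is $U\ket{z^\ast}$ for a single computational-basis string $z^\ast$. Applying $U^\dagger$ keeps the guide a (locally unitarily transformed) semi-classical subset state, since the $U_j^\dagger$ compose with the single-qubit isometries, and the promise $\norm{\Pi_0\ket{\zeta}}^2 \geq \delta$ forces $z^\ast$ into its support with weight $\geq \delta \geq 1/\poly{n}$. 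One then recovers $z^\ast$ via \cref{prop:sample-from-subset-state}, evaluates the diagonal energy $\bra{z^\ast}H\ket{z^\ast}$ classically, and decides against $a,b$. A clean deterministic placement in \cl{P} follows whenever the polynomial support can be enumerated (e.g.\ an unencoded guide); for genuinely encoded guides the honest statement of this step is randomised, and reconciling this with the \cl{P} entry of the table is a point I would flag as needing care.

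The \clw{QMA}{complete} row is the product of two ingredients. Containment \sc{Guided $\mathcal{M}$-Hamiltonian}~$\subseteq \cl{BQP}$ is generic: a quantum verifier efficiently prepares the semi-classical encoded subset state $\ket{\zeta}$, runs phase estimation on $H$, and uses the $\geq \delta$ ground-state overlap to estimate $\lambda_0(H)$ to inverse-polynomial precision~\cite{GLG22,CKFH+23}. For hardness I would extend the \clw{BQP}{hardness} of~\cite{CKFH+23} --- established only for specific \clw{QMA}{complete} families --- to \emph{every} \clw{QMA}{complete} $\mathcal{M}$ by composing the~\cite{CM16} simulation that maps a universal family onto $\mathcal{M}$ with the guiding-state tracking above, checking that the history-state guide survives each perturbative step with overlap in $(1/\poly{n}, 1 - 1/\poly{n})$ via \cref{lma:norm-tracking}.

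The \clw{StoqMA}{complete} row is the crux and the reason the statement is a conjecture rather than a theorem. \clw{BPP}{hardness} is supplied by \cref{thrm:G2LSH-BPP-hard}, and the same generic verifier places the problem in \cl{BQP}, so it sits strictly between the two; the difficulty is that no sharp classical upper bound is known, so the target class \cl{X} is itself only conjectural. Proving \clw{X}{completeness} therefore splits into two genuinely open tasks: giving an operational definition of \cl{X} (the candidates noted above --- a \cl{BPP}\textsubscript{q} variant with restricted single-qubit phases, or a level of the Fourier hierarchy \cl{FH}~\cite{Shi03} --- each fail to simultaneously respect stoquasticity and capture the problem), and then proving matching containment and \clw{X}{hardness}. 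I expect this to be the main obstacle: the gate restriction $\{X,\Gate{Cnot},\Gate{Toffoli}\}$ together with the Perron--Frobenius constraint on available phases obstruct a reduced phase-estimation routine, so even \emph{formulating} \cl{X} so that both a containment algorithm and a hardness reduction go through is unresolved. Consequently the duality would be fully provable in the \cl{P} and \clw{QMA}{complete} rows and reduce, in the two intermediate rows, to (i) the deterministic-versus-randomised placement for classical guides and (ii) the identification of the intermediate class \cl{X}.
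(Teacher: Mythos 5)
Your proposal is strategically sound and, at the level of overall structure, matches what the paper actually does: the statement is a conjecture, and the paper likewise only assembles row-by-row supporting evidence --- \cref{thrm:G2LSH-BPP-hard} plus the results of~\cite{CKFH+23,GLG22} for the hardness entries, an explicit theorem for the \clw{NP}{complete} row, and an admittedly open identification of \cl{X} for the \clw{StoqMA}{complete} row, which you correctly single out as the irreducible obstruction. The one place where your route genuinely differs is the \clw{NP}{complete} row. You recover $z^\ast$ by sampling via \cref{prop:sample-from-subset-state}, which is randomised, and you then (rightly) worry about reconciling a randomised procedure with the \cl{P} entry of the table. The paper dissolves this concern rather than flagging it: since the description of the guiding state --- the polynomial-size list of strings in $S$ --- is itself part of the input, one deterministically enumerates $S$ and computes $\min_{x\in S}\mel{x}{H}{x}$. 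The crucial observation in the paper's proof (for diagonal $\mathcal{M}$) is that any nonzero fidelity forces, by orthogonality of computational basis states, $\g \in S$, so this minimum equals $\lambda_0(H)$ exactly and the comparison against $a,b$ is exact; no sampling is needed, giving a clean deterministic \cl{P} placement. For locally diagonalisable rather than strictly diagonal $\mathcal{M}$, the paper does essentially what you propose --- move to the diagonal basis --- but notes this requires amending the definition of semi-classical subset states (multi-alphabet, or encoded with suitable isometries) so that Hamiltonian and guide are presented in the same basis. Finally, on the \clw{QMA}{complete} row you assert \cl{BQP} containment as generic via state preparation plus phase estimation; the paper is more guarded, stating that the upgrade from \clw{BQP}{hard} to \clw{BQP}{complete} is conditional on proving that semi-classical \emph{encoded} subset states can be efficiently prepared, a point you should not treat as settled if you adopt the paper's definitions.
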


The results of Refs.~\cite{cade2023improved,gharibian2023dequantizing} and the present work support this conjecture.
Note that this conjecture under the confines of the general setting for the \sc{Guided Local Hamiltonian} problem, i.e., then the promise gap is inverse-polynomial.
In the case where efficient classical algorithms (cf. \cref{app:classical-algorithms}) can be utilised, the result collapses to a dichotomy theorem.
Specifically, any problem is in \cl{P} or \cl{BPP}. 

We have partially resolved the conjecture for general stoquastic Hamiltonians, showing \clw{BPP}{hardness} and it is known that the results of Refs.~\cite{cade2023improved} show \clw{BQP}{hardness} for a large class of Hamiltonians, even when restricted to different geometries.
Evidence for the \clw{NP}{complete} instances being in \cl{P} can be provided by the following result when $\mathcal{M}$ is a set of diagonal matrices acting on at most two qubits.

\begin{restatable}{theorem}{GuidedDiagHamiltonian}
    The \sc{Guided $\mathcal{M}$-Hamiltonian} problem is in \cl{P} for any set $\mathcal{M}$ of diagonal matrices acting on at most two qubits.
\end{restatable}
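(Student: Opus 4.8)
The plan is to exploit the fact that a Hamiltonian assembled from diagonal $2$-local terms is itself diagonal in the computational basis, which collapses the eigenproblem to a purely combinatorial one. First I would observe that if every interaction in $\mathcal{M}$ is diagonal, then $H = \sum_j h_j$ is diagonal, so its eigenvectors are the standard basis states $\ket{z}$ and its eigenvalues are the diagonal entries $E_z = \bra{z}H\ket{z} = \sum_j \bra{z}h_j\ket{z}$. Because each $h_j$ acts on at most two qubits, the contribution $\bra{z}h_j\ket{z}$ depends only on two bits of $z$ and is read off in constant time; summing the $m = \poly{n}$ terms yields $E_z$ in $\poly{n}$ time for any fixed $z$. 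The non-degeneracy assumption from \cref{sec:preliminaries} then guarantees a unique minimizing string $g$, so that the ground state is $\ket{g}$, $\Pi_0 = \ketbra{g}{g}$, and $\lambda_0(H) = E_g = \min_z E_z$.

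Second, I would use the guiding-state promise to confine $g$ to an efficiently accessible set. Since $\abs{\braket{g}{\zeta}}^2 = \norm{\Pi_0\ket{\zeta}}^2 \geq \delta > 0$, the ground-state string $g$ must lie in the support of $\ket{\zeta}$. Writing $\ket{\zeta} = \frac{1}{\sqrt{\abs{S}}}\sum_{x\in S}\bigotimes_{j} V_j\ket{x_j}$, this support decomposes as $\bigcup_{x\in S}\prod_{j}\mathrm{supp}(V_j\ket{x_j})$, each per-block factor having size at most $2^{m_j} = O(1)$. The crucial point is that if $C$ is any candidate set guaranteed to contain $g$, then because $g$ is the global minimizer we have $\min_{z\in C}E_z = E_g = \lambda_0(H)$ exactly; the algorithm evaluates $E_z$ for every $z\in C$, takes the minimum, and accepts iff it is at most $a$ and rejects iff it is at least $b$, with correctness following from $b-a \geq 1/\poly{n}$.

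Third, the remaining task is to produce $C$ efficiently. For a plain semi-classical subset state the support is exactly $S$, handed over as an explicit list of $\poly{n}$ strings, so I would take $C = S$; the entire procedure is then deterministic and runs in $\poly{n}$ time, placing the problem in \cl{P}. For a general encoded subset state I would instead invoke \cref{prop:sample-from-subset-state} to draw $z \sim \abs{\braket{z}{\zeta}}^2$: since $g$ is sampled with probability at least $\delta \geq 1/\poly{n}$, drawing $\poly{n}$ samples captures $g$ with overwhelming probability and supplies $C$.

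The main obstacle is precisely this last step in the fully general encoded case: the support $\bigcup_{x}\prod_j \mathrm{supp}(V_j\ket{x_j})$ can be exponentially large, and minimizing the $2$-local cost function $E_z$ over even a single product factor is an Ising-type optimization that is \cl{NP}-hard in general, so one cannot simply enumerate the support deterministically. The $1/\poly{n}$ overlap promise is what rescues the argument, since it lets the sample access of \cref{prop:sample-from-subset-state} locate $g$ without enumeration; strict membership in \cl{P} (rather than merely \cl{BPP}) relies on the guiding state having a polynomially enumerable support, which holds for the semi-classical subset states of \refcite{GLG22} and, more generally, whenever the isometries $V_j$ act as permutations so each factor is a single string. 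I would therefore present the deterministic enumeration argument for the polynomial-support regime and appeal to \cref{prop:sample-from-subset-state} for the encoded generalisation.
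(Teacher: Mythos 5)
Your proposal is correct and takes essentially the same route as the paper's proof: diagonality makes the ground state a computational basis string $\g$, the nonzero-overlap promise forces $\g$ into the support of the guiding state by orthogonality, and exactly minimising $\mel{x}{H}{x}$ over the polynomially-sized support then decides the instance in deterministic polynomial time. Your extra discussion of general encoded subset states (where deterministic enumeration can fail and sampling only yields \cl{BPP}-type membership) goes beyond the paper, whose proof implicitly restricts the guiding state to a plain semi-classical subset state --- a caveat you correctly identify and the paper leaves unstated.
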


A proof of this result can be found in~\cref{app:proofs}.
The same results holds for any set of interactions $\mathcal{M}$ that can be locally diagonalised by a unitary $U\in {\rm SU}(2)$, if one also amends the definition of semi-classical subset-sates over, globally fixed, but different alphabets.
Then, it is sufficient to present both the Hamiltonian and the guiding state in the same basis, solving the problem in polynomial time.
In the present context, multi-alphabet semi-classical subset states would be appropriate.
Else, a semi-classical encoded subset state with appropriately defined isometries can be used to encode the guiding state.
Upon proving that semi-classical encoded subset states can be efficiently prepared, the \clw{BQP}{hardness} tag can be changed to \clw{BQP}{complete}.

\section{Quantum Hardness}
The \sc{Pinned Local Hamiltonian} problem is a variant of the \sc{Local Hamiltonian} problem where a subset of qubits is fixed in a known state $\ket{\phi}$, called the \emph{pinned subsystem}, while the remaining qubits --- the \emph{free subsystem} --- are optimised over to find the lowest energy state.
Physically, pinning corresponds to fixing a background configuration or static boundary condition.
Therefore, a pinned system has a level of external control, not present in the standard problem.
The idea of the \sc{Pinned} problem is to determine the lowest energy within the pinned subspace, i.e., how computationally hard is to find the ground-state energy of $H_\phi = \expval{H}{\phi}$? 
Using a well-known procedure of embedding general $2$-local Hamiltonians into a larger Hilbert space, \citet{nagaj2021pinned} showed that the \sc{Pinned Stoquastic $3$-Local Hamiltonian} problem is \clw{QMA}{complete}.

Pinning qubits in a stoquastic Hamiltonian can break its structure --- the Hamiltonian within the pinned subspace may no longer be stoquastic, making the free subsystem as hard as the general (\cl{QMA}) case.
The same idea applies in the guided setting when the given Hamiltonian is a pinned stoquastic Hamiltonian.
In both cases, fixing part of the system can amplify complexity.

The \sc{Guided Pinned Local Hamiltonian} problem therefore provides an additional input in the form of a guiding state, which is promised to have a non-negligible overlap with the ground state of the pinned subspace Hamiltonian $H_\phi$.

\begin{restatable}{theorem}{GuidedPinnedLocalHamiltonian}
    \label{thrm:GP2LSH-BQP-hard}
    The \sc{Guided Pinned Stoquastic $3$-Local Hamiltonian} problem is \clw{BQP}{hard} for any $\delta \in (1/\poly{n}, 1 - 1/\poly{n})$.
\end{restatable}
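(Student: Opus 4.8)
The plan is to reduce directly from the \clw{BQP}{hard} \sc{Guided $2$-Local Hamiltonian} problem for \emph{general} (non-stoquastic) Hamiltonians to the pinned stoquastic problem, using the embedding of \refcite{BCGL22} exactly as \citet{NHES21} used it to lift \clw{QMA}{completeness} from the $2$-local general setting to the $3$-local pinned stoquastic setting. The essential observation is that pinning the ancillas introduced by the embedding reproduces the original general Hamiltonian on the free subsystem, so the guided hardness transfers almost verbatim; the locality bookkeeping ($2$-local general $\to$ $3$-local stoquastic) matches the theorem statement.

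First I would invoke the results of \refcite{CKFH+23} (building on \refcite{GLG22}): the \sc{Guided $2$-Local Hamiltonian} problem for general Hamiltonians is \clw{BQP}{hard}, and the hardness is witnessed by a semi-classical encoded subset guiding state. Fix such an instance $(H_A, a, b, \ket{\zeta_A})$, where $H_A$ is a $2$-local general Hamiltonian on $n$ qubits with $b-a \geq 1/\poly{n}$, $\ket{\zeta_A}$ is a semi-classical encoded subset state (\cref{def:semi-classical-encoded-subset-state}), and $\norm{\Pi_0^{A}\ket{\zeta_A}}^2 \geq \delta$ with $\delta \in (1/\poly{n}, 1-1/\poly{n})$. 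Deciding $\lambda_0(H_A) \leq a$ versus $\lambda_0(H_A) \geq b$ under this promise is \clw{BQP}{hard}.

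Next I would apply the \refcite{BCGL22} embedding to obtain a $3$-local stoquastic Hamiltonian $H_B$ on $n+p$ qubits together with the designated product state $\ket{\phi}$ on the $p$ ancilla (pinned) qubits such that the pinned-subspace Hamiltonian satisfies $H_\phi = \bra{\phi}H_B\ket{\phi} = H_A$ on the $n$ free qubits, up to a known additive and multiplicative constant absorbed into the thresholds. Because the free subsystem is untouched by the embedding, the ground-space projector of $H_\phi$ equals $\Pi_0^{A}$; taking the \emph{same} state $\ket{\zeta_A}$ as the guiding state for the pinned problem therefore gives $\norm{\Pi_0^{H_\phi}\ket{\zeta_A}}^2 = \norm{\Pi_0^{A}\ket{\zeta_A}}^2 \geq \delta$, and $\ket{\zeta_A}$ remains a semi-classical encoded subset state of the required form (it lives on the free subsystem, matching the problem definition). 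After renormalising so that $\norm{H_B}\leq 1$ (a global $1/\poly{n}$ rescaling) and shifting $a,b$ by the embedding constant, the promise gap $b'-a' \geq 1/\poly{n}$ is preserved, so an oracle for the \sc{Guided Pinned Stoquastic $3$-Local Hamiltonian} problem decides the original \clw{BQP}{hard} instance.

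I expect the main obstacle to be verifying that the embedding yields an \emph{exact} pinning, $H_\phi = H_A$, rather than only an approximate one valid in a low-energy subspace, since exactness is what lets the guiding-state overlap transfer without degradation. If the pinning is only approximate --- i.e.\ $\ket{\phi}$ selects $H_A$ up to an inverse-polynomial perturbation in operator norm --- then I would combine the simulator-Hamiltonian framework with \cref{lma:norm-tracking} (taking $A = O(1/\poly{n})$ and $B = \delta$) to show the overlap of $\ket{\zeta_A}$ with the true ground state of $H_\phi$ remains in $(1/\poly{n}, 1-1/\poly{n})$, exactly as in the locality-reduction argument preceding \cref{thrm:G2LSH-BPP-hard}. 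A secondary point to confirm is that stoquasticity of $H_B$ is genuinely destroyed by the pinning, so that the free subsystem recovers full \cl{BQP} hardness --- this is precisely the sign-structure-breaking phenomenon highlighted in \refcite{NHES21}.
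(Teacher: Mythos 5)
Your proposal is correct and follows essentially the same route as the paper: both reduce from the \clw{BQP}{hard} \sc{Guided $2$-Local Hamiltonian} instance of \refcite{CKFH+23} and apply the \refcite{BCGL22} embedding as in \refcite{NHES21} (splitting $H$ into stoquastic and non-stoquastic parts and coupling the latter to a pinned ancilla via $-P\otimes X_q$), with the guiding state and its overlap $\delta$ carried over unchanged. Your worry about approximate pinning does not materialise --- the embedding is exact, since $\bra{-}_q(V\otimes\mathbb{I} - P\otimes X_q)\ket{-}_q = V + P = H$ --- so your fallback via \cref{lma:norm-tracking} is unneeded.
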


\begin{proof}
    Consider an instance of the \sc{Guided $2$-Local Hamiltonian} problem with Hamiltonian $H$, over $n$ qubits, generated by the set of interactions $\{X,Z,XX,ZZ\}$ such that $H$ has a ground state $\ket{\varphi_0}$ and (SCESS) guiding state $\ket{\zeta}$ with overlap $\delta$~\cite{cade2023improved}.
    Split $H$ into stoquastic ($V$) and non-stoquastic ($P$) parts, 
    \begin{multline*}
        H = \underbrace{\sum_{u,v} J_{uv}^{\leq 0} X_uX_v + L_{uv}Z_uZ_v + \sum_{u} f_{u}^{\leq 0}X_u + h_{u}Z_u}_{\eqqcolon V} \\[0.2cm] + \underbrace{\sum_{u,v} J_{uv}^{> 0} X_uX_v + \sum_{u} f_{u}^{> 0}X_u}_{\eqqcolon P}.
    \end{multline*}
    Embed $H$ in a larger Hilbert space by introducing an auxiliary qubit $q$, fixed in the state $\ket{-}$, and define the Hamiltonian
    \begin{equation*}
        \hat{H} = V\otimes I - P\otimes X_q.
    \end{equation*}
    The auxiliary qubit $q$ is coupled to the non-stoquastic part of the Hamiltonian and represents the pinned subsystem.
    The ground state of $\hat{H}$ is given by the tensor product of the ground state of $H$ and the auxiliary qubit, i.e., $\ket{\varphi_0}\ket{-}$.
    It then follows that the guiding state $\ket{\zeta}\ket{-}$ has the same overlap with the ground state of $\hat{H}$, as $\ket{\zeta}$ has with $\ket{\varphi_0}$. 
    The new guiding state is also a semi-classical encoded subset state.
    Therefore, if $\lambda_0$ is at most $a$ (the lower threshold of the original \clw{BQP}{hard} instance), then the ground-state energy of $\hat{H}$ is at most $a$.
    Conversely, if $\lambda_0$ is at least $b$ (the upper threshold of the original \clw{BQP}{hard} instance), then the ground-state energy of $\hat{H}$ is at least $b$.
\end{proof}

This Theorem shows the \textsc{Guided Pinned Stoquastic Local Hamiltonian} problem is not contained in \cl{BPP} unless $\cl{BPP} = \cl{BQP}$.
Since general local Hamiltonians contain local stoquastic Hamiltonians, the \sc{Guided Pinned Local Hamiltonian} problem is also \clw{BQP}{hard}.
Applications of perturbative gadget reductions fail for the pinned Hamiltonians~\cite{nagaj2021pinned}, and the additional of the auxiliary qubit destroys the underlying geometry of the original Hamiltonian in the sense that if $H$ is geometrically local, then $\hat{H}$ is not.

\section{Conclusion}
In this work, we extended the \sc{Guided Local Hamiltonian} problem~\cite{cade2023improved} to local stoquastic Hamiltonians.
The variant studied here differs from that originally conceived by Bravyi~\cite{bravyi2015monte} in that the guiding state is provided as part of the input.
It is known that when a verifier receives a local Hamiltonian and, in addition, a description of a state that is promised to overlap with the ground state, the problem of determining the ground-state energy becomes \clw{BQP}{hard}.
This result extends to a range of Hamiltonian families, including those relevant to condensed-matter systems, and Hamiltonians are geometrically constrained.
We have shown that when restricted to stoquastic Hamiltonians, the problem becomes \clw{BPP}{hard}, even when the Hamiltonian is $2$-local and confined to a square lattice.

\begin{result*}[\cref{thrm:G2LSH-BPP-hard}]
    The \sc{Guided $2$-Local Hamiltonian} problem for stoquastic Hamiltonians is \clw{BPP}{hard} for any $\delta \in (1/\poly{n}, 1 - 1/\poly{n})$, even when restricted to a square lattice.
\end{result*}

Our results naturally complement the previous \clw{BQP}{hardness} result for the \sc{Guided Local Hamiltonian} problem.
The existence of a semi-classical subset state with non-negligible overlap with the ground state of a local stoquastic Hamiltonian follows from applying the Feynman-Kitaev construction to \clsb{BPP}{q} circuits.
In particular, the history state defined for part of the circuit's evolution is a subset state.
Since only classically reversible gates are used, the history state admits a succinct description.
The guiding state remains a semi-classical encoded subset state even after applying perturbative gadget reductions that reduce the Hamiltonian's locality to two.
We find that the overlap between the guiding and ground states remains within the desired range under these reductions.
It remains unclear whether we can conclusively say there is no quantum advantage for the \sc{Guided Local Hamiltonian} problem, with inverse-polynomial precision, when restricted to stoquastic Hamiltonians.
Moreover, it is uncertain whether the semi-classical subset states can fully trivialise the problem in the stoquastic regime.
Our reduction shows that, in the presence of a suitable guiding state, the problem of estimating the ground-state energy of a stoquastic Hamiltonian is universal for classical probabilistic computation.
In particular, any efficient probabilistic algorithm can be embedded into such an instance, and hence no deterministic classical algorithm is expected to solve the problem unless $\cl{BPP} = \cl{P}$.

In the mild setting where the promise gap and overlap are constant, the problem is known to be solvable via efficient classical algorithms~\cite{gharibian2023dequantizing}.
We have proven a new but simple result demonstrating that within this mild parameter regime, when the guiding state is preparable in constant depth, there exists a deterministic polynomial-time classical algorithm for the problem~\cite{mann2024algorithmic} (see \cref{app:classical-algorithms}).
This result applies not just to local stoquastic Hamiltonians.
Interestingly, we do not require the strong norm bound condition of Ref.~\cite{gharibian2023dequantizing}; the idea of relaxing the norm bound to $O(\poly{n})$ is inspired by Ref.~\cite{zhang2024dequantized}, though in contrast to this work we do not require the additional promise of an energy interval containing the ground-state energy.

To summarise, we proposed a trichotomy conjecture concerning the classification of Hamiltonian families under the \sc{Guided Local Hamiltonian} framework.
Specifically, we conjecture that the \sc{Guided Local Hamiltonian} problem for stoquastic Hamiltonians is \clw{X}{complete} for some complexity class \cl{X} that lies strictly between \cl{BPP} and \cl{BQP}.
While identifying the exact class \cl{X} may be difficult, establishing upper and lower bounds could be more feasible.

In addition to our \clw{BPP}{hardness} result, we also showed that the \sc{Guided Pinned $3$-Local Hamiltonian} problem is \clw{BQP}{hard} for stoquastic Hamiltonians.
This version assumes that the Hamiltonian has a subset of qubits fixed in a known state, termed the \emph{pinned subsystem}, while the remaining qubits remain free.
Interestingly, the Hamiltonian restricted to the pinned subspace need not remain stoquastic, rendering the free subsystem as hard as general local Hamiltonians.
Unfortunately, we are unaware of any techniques to further reduce the locality or geometry of the Hamiltonian in this case without compromising the pinned feature.

\subparagraph{Discussion.} 
One open problem is determining the position of \cl{SStoqMA} (subset state \cl{StoqMA}) within the complexity hierarchy.
With our current understanding of \cl{StoqMA}, it remains unknown whether it is equivalent to\cl{SStoqMA}, in contrast to the equivalence of \cl{SQMA} with \cl{QMA}~\cite{grilo2015qma}.
The difficulty here also arises from a weak grasp on error reduction techniques for \cl{StoqMA}.
It is straightforward, following the proof of Ref.~\cite[Theorem 12]{grilo2015qma}, to show that \cl{SStoqMA} $\subseteq$ \cl{StoqMA}; however, the reverse direction presents challenges in the completeness portion.
In particular, by~\cite[Lemma 8]{grilo2015qma}, it follows that 
\begin{equation*}
    \expval{V^\dagger \Pi_+ V}{S} \geq \alpha(n) - \left(1 - \frac{1}{128 (p(n) +3)}\right).
\end{equation*}
Here, $V$ is the verifier's circuit and $\Pi_+$ is the measurement operator.
This bound implies that the acceptance probability of the subset state $\ket{S}$ may not be guaranteed to be $\Omega(1/\poly{n})$.
To elaborate, if $\alpha(n) = 1 - \epsilon(n)$ with $\epsilon(n) \in [0,1/2)$, then
\begin{equation*}
    \expval{V^\dagger \Pi_+ V}{S} \gtrsim \frac{1}{p(n)} - \epsilon(n);
\end{equation*}
the circuit will accept $\ket{S}$ with probability $\Omega(1/\poly{n})$ only if $\epsilon(n) \leq 1/(2p(n))$.
So, is \cl{SStoqMA} strictly weaker than \cl{StoqMA}?

Another problem, already hinted at in Ref.~\cite{grilo2015qma}, is determining the complexity of computing the energy of subset states.
Inspired by recent work on product states and local Hamiltonians~\cite{kallaugher2024complexity}, consider the problem of deciding, given a local Hamiltonian, whether there exists a subset state such that $\mel{S}{H}{S} \leq a$, or whether for all subset states $\mel{S}{H}{S} \geq b$, promised one of the cases holds.
Interestingly, this was recently resolved for a subclass known as \emph{weight-$k$} states by Bremner \emph{et al}.~\cite{bremner2025parameterized}.
They showed that the \sc{Weight-$k$ $l$-Local Hamiltonian} problem belongs to \cl{QW[1]}, a parameterised quantum complexity class~\cite{bremner2022quantum} analogous to \cl{W[1]}~\cite{downey2013fundamentals}.
The same question can be asked for semi-classical subset states.
Ref.~\cite[Lemma 8]{grilo2015qma} suggests the problem may even be \clw{QMA}{hard}.
Though, a parameterised perspective may offer a more fruitful approach.

An additional open problem is to determine how to efficiently construct guiding states for candidate Hamiltonians. 
Since constructing a low-energy product state is already \clw{FNP}{hard} for $2$-local Hamiltonian families~\cite{kallaugher2024complexity}, efficient exact methods are unlikely to exist in general.
Though tractable exceptions are known, for example, commuting frustration-free Hamiltonians admit polynomial-time ground-state preparation algorithms, placing this problem in \cl{FBQP}~\cite{schwarz2013information, massar2021total}.
Given the inherently classical nature of stoquastic Hamiltonians, it is plausible that effective heuristics can be developed for this family.
This could be particularly valuable for certain fermionic systems where the Hamiltonian is stoquastic.

Finally, we conjecture that the \sc{Guided Local Hamiltonian} problem for the transverse-field Ising model is \clw{BPP}{hard}. 
We anticipate that the reduction techniques of Bravyi and Hastings~\cite{bravyi2016complexity}, which encode general $2$-local stoquastic Hamiltonians into transverse-field Ising models on degree-$3$ graphs via Schrieffer-Wolff perturbative gadgets, can be adapted to our setting, with the guiding state remaining a semi-classical subset state with non-negligible overlap throughout. 
Though this reduction utilises global encodings, which may complicate the construction of the guiding state description in polynomial time.
An immediate corollary of this conjecture however, is the \clw{BPP}{hardness} for the families of local Hamiltonians that are typically \clw{StoqMA}{complete} in the Cubitt-Montanaro tetrachotomy~\cite{cubitt2016complexity}.
We leave a formal proof as an open problem.

\textbf{Data availability.} 
No data was generated or analysed in this study.

\textbf{Acknowledgements.}
GW would like to thank Yuval Sanders, Ryan Mann, Michael Bremner and Christopher Howell for helpful discussions and feedback.
GW was supported by a scholarship from the Sydney Quantum Academy and also supported by the ARC Centre of Excellence for Quantum Computation and Communication Technology (CQC2T), project number CE170100012.

\bibliographystyle{apsrev4-1}
\bibliography{ref}

\onecolumngrid
\appendix
\section{Deferred Proofs}\label{app:proofs}

\BPPqBPP*

\begin{proof}
    Informally, this result follows from Ref.~\cite{bravyi2006merlin} and discussions in Ref.~\cite{waite2025complexitya}.
    This may not be immediately clear, thus for clarity, we provide a formal proof.
    To prove equivalency we need to show that \clsb{BPP}{q} $\subseteq$ \cl{BPP} and \cl{BPP} $\subseteq$ \clsb{BPP}{q}.\newline

    \paragraph{\clsb{BPP}{q} $\subseteq$ \cl{BPP}.}
    Let $L = (L_{\sc{yes}}, L_{\sc{no}})$ be a promise problem in \clsb{BPP}{q}.
    This means that there exists a classically reversible quantum circuit $M_{|x|}$ that acts on input $\ket{x}$ along with classical ancillae $\ket{0^m}$ and quantum ancillae $\ket{+^p}$.
    The role of the $\ket{+^p}$ ancillae is to provide randomness via a uniform superposition.

    Since the quantum gates involved (e.g., $X$, $\Gate{Cnot}$, and $\Gate{Toffoli}$) are classically reversible, and the only source of quantum behaviour comes from the superposition state $\ket{+^p}$, we can replace the quantum randomness with classical randomness.
    This shows that any computation performed by the quantum circuit $M_{|x|}$ can be simulated by a classical probabilistic circuit.
        
    \subparagraph{Case 1:}
    By definition, if $x \in L_{\sc{yes}}$, the quantum circuit $M_{|x|}$ outputs $1$ with probability at least $\frac{2}{3}$.
    As the quantum circuit can be simulated by a classical probabilistic circuit with random coin flips instead of $\ket{+^p}$, the same success probability can be achieved classically.
    Therefore, there exists a classical probabilistic circuit that outputs $1$ with probability at least $\frac{2}{3}$ for $x \in L_{\sc{yes}}$.

    \subparagraph{Case 2:}
    Similarly, if $x \in L_{\sc{no}}$, the quantum circuit $M_{|x|}$ outputs $1$ with probability at most $\frac{1}{3}$.
    Again, since the quantum computation can be simulated classically by replacing the quantum randomness with classical randomness, the classical probabilistic circuit will also have an output probability of at most $\frac{1}{3}$ for $x \in L_{\sc{no}}$. \newline

    \paragraph{\cl{BPP} $\subseteq$ \clsb{BPP}{q}.}
    Now, let $L = (L_{\sc{yes}}, L_{\sc{no}})$ be a promise problem in \cl{BPP}.
    By definition, there exists a classical probabilistic circuit that solves this problem with the required success probability bounds.

    We will now simulate the classical probabilistic circuit using a quantum circuit.
    The classical random bits used by the probabilistic machine can be replaced by initializing the quantum circuit with ancillae in the state $\ket{+^p}$, which provides the required randomness through a uniform superposition.
    The classical gates in the probabilistic circuit can be mapped directly to the corresponding reversible quantum gates (e.g., Toffoli).

    \subparagraph{Case 1:}
    If $x \in L_{\sc{yes}}$, the classical probabilistic machine outputs $1$ with probability at least $\frac{2}{3}$.
    In the quantum circuit, the randomness is simulated by the $\ket{+^p}$ ancillae, and the classical gates are reversible.
    Therefore, the quantum circuit outputs $1$ with probability at least $\frac{2}{3}$ for $x \in L_{\sc{yes}}$.

    \subparagraph{Case 2:}
    If $x \in L_{\sc{no}}$, the classical probabilistic machine outputs $1$ with probability at most $\frac{1}{3}$.
    The quantum circuit, which simulates the classical circuit, also outputs $1$ with probability at most $\frac{1}{3}$, as the quantum ancillae provide the same type of randomness and the gates are reversible.
\end{proof}

\normtracking*

\begin{proof}
    We recall some basic facts and definitions:
    \begin{inparaenum}[(i)]
        \item $\ket{u} = \ket{u} - \ket{v} + \ket{v}$;
        \item $\abs{\braket{u}{v}} \leq \norm{\ket{u}} \norm{\ket{v}}$;
        \item ${\abs{x+y} \geq \abs{\abs{x} - \abs{y}}}$.
    \end{inparaenum}
    We start with the bound from below.
    \begin{equation*}
        \abs{\braket{a}{c}} = \big|{\braket{b}{c} + (\bra{a}-\bra{b})\ket{c}}\big| \geq \big|{\abs{\braket{b}{c}} - \abs{(\bra{a}-\bra{b})\ket{c}}}\big| \geq \abs{\sqrt{\delta_{bc}} - \norm{\ket{a}-\ket{b}}} \geq \abs{\sqrt{\delta_{bc}} - \epsilon_{ab}},
    \end{equation*}
    hence, provided $\epsilon_{ab} \leq \sqrt{\delta_{bc}}$, it follows that 
    \begin{equation*}
        \abs{\braket{a}{c}}^2 \geq (\sqrt{\delta_{bc}} - \epsilon_{ab})^2.
    \end{equation*}
    In the event that $\epsilon_{ab} > \sqrt{\delta_{bc}}$, we have $\abs{\braket{a}{c}} \geq 0$.
    The upper bound is obtained by applying the triangle inequality:
    \begin{equation*}
        \abs{\braket{a}{c}} = \abs{\braket{b}{c} + (\bra{a}-\bra{b})\ket{c}} \leq \abs{\braket{b}{c}} + \abs{(\bra{a}-\bra{b})\ket{c}} \leq \abs{\braket{b}{c}} + \norm{\ket{a}-\ket{b}} \leq \abs{\sqrt{\delta_{bc}} + \epsilon_{ab}}.
    \end{equation*}
\end{proof}

\tensorproductsubsetstates*

\begin{proof}
    The proof of this proposition follows trivially from considering the tensor product of two semi-classical subset states.
    Let $\ket{S_\mathcal{V}}$ and $\ket{T_\mathcal{U}}$ be two semi-classical subset states over $(S, \mathcal{V})$ and $(T, \mathcal{U})$ respectively.
    We have that $S \subset \B^n$, $T \subset \B^m$ where $\abs{S} = \poly{n}$ and $\abs{T} = \poly{m}$.
    Additionally, $\mathcal{V} = \{V_j\}_{j\in[n]}$ and $\mathcal{U} = \{U_k\}_{k\in[m]}$ are sets of isometries from single qubit states to $O(1)$-qubit states.
    The tensor product of these states is
    \begin{align*}
        \ket{S_\mathcal{V}} \ket{T_\mathcal{U}} &= \frac{1}{\sqrt{\abs{S}\abs{T}}} \left(\sum_{x \in S} \bigotimes_{j\in[n]} V_j\ket{x}\right)\left( \sum_{y \in T} \bigotimes_{k\in[m]} U_k\ket{y}\right),\\
            &= \frac{1}{\sqrt{\abs{S}\abs{T}}} \sum_{x \in S} \sum_{y \in T} \left( \bigotimes_{j\in[n]} V_j\ket{x}\right) \left( \bigotimes_{k\in[m]} U_k\ket{y}\right),\\
            &= \frac{1}{\sqrt{\abs{S}\abs{T}}} \sum_{z \in S \times T} \bigotimes_{l\in[n+m]} W_l \ket{z_l},\\
            &= \frac{1}{\sqrt{\abs{R}}} \sum_{z \in R} \bigotimes_{l\in[p]} W_l \ket{z_l},\\
    \end{align*}
    where $R = S \times T \subset \B^{p}$, with $p = n+m$ and $\abs{R} = \abs{S}\abs{T} = \poly{n}\poly{m} \leq \poly{p}$.
    We also have a set of $p$ isometries $\mathcal{W} = \{W_l\}_{l\in[p]} = \mathcal{V}\cup\mathcal{U}$.
    Clearly, the tensor product of two semi-classical encoded subset states is itself a semi-classical encoded subset state.
    This holds for a polynomial number of such tensor products. 
\end{proof}

\classicallyreversiblegateonsubsetstate*

\begin{proof}
    This follows trivial from the fact that each classical gate is a bijective map.
    Let $R$ be a classical reversible gate acting on a subset state $\ket{S}$.
    Then we have 
    \begin{equation*}
        R\ket{S} = \frac{1}{\sqrt{\abs{S}}} \sum_{x \in S} R\ket{x} = \frac{1}{\sqrt{\abs{S}}} \sum_{y \in R(S)} \ket{y} = \ket{R(S)}.
    \end{equation*}
\end{proof}

\historystatesubsetstate*

\begin{proof}
    The proof follows from the fact that for each $k \in [K]$, the sets $(R_k \cdots R_1(S))\times \{1^k\,0^{K-k}\}$ are mutually orthogonal.
    Let $S_k = (R_k \cdots R_1(S))\times \{1^k\,0^{K-k}\}$ and $\mathcal{S} = \bigcup_k S_k$, then 
    \begin{equation*}
        \frac{1}{\sqrt{K}} \sum_{k \in [K]} \ket{S_k} = \frac{1}{\sqrt{K}} \sum_{k \in [K]} \sum_{y \in S_k} \frac{1}{\sqrt{\abs{S}}}\ket{y} = \frac{1}{\sqrt{\abs{S}\cdot K}} \sum_{x \in \mathcal{S}} \ket{x} \eqqcolon \ket{\mathcal{S}}.
    \end{equation*}
    For $k \in [K]$, an element of $S_k$ is a binary string of the form $y = R_k \cdots R_1(x) \mathbin\Vert \{\underbrace{11\dots 1}_{k}\,\underbrace{00\dots 0}_{K-k}\}$ where $x \in S$.
\end{proof}

\partialhistorySCESS*

\begin{proof}
    The proof follows from the proof of~\cref{lma:history-state-subset-state} in the same essence as~\cref{cor:history-state-semi-classical-subset-state}.
    Note that the classically reversible gates commute with the isometries since the non-trivial supports of each are disjoint.
\end{proof}

The proof of the following theorem makes use of the following lemma from Ref.~\cite{cubitt2016complexity},

\begin{lemma}
    Let $\mathcal{M}$ be a set of diagonal matrices acting on at most two qubits.
    If for any $M \in \mathcal{M}$ is $1$-local, then the \sc{$\mathcal{M}$-Hamiltonian} problem is in \cl{P}, else the \sc{$\mathcal{M}$-Hamiltonian} problem is in \clw{NP}{complete}.
\end{lemma}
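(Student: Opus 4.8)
The plan is to exploit that every $M\in\mathcal{M}$ is diagonal, so any Hamiltonian $H=\sum_j \alpha_j h_j$ assembled from $\mathcal{M}$ (with signed real weights $\alpha_j$, each $h_j$ a copy of some $M$ placed on one or two qubits) is itself diagonal in the computational basis. Hence $\lambda_0(H)=\min_{x\in\B^n}\mel{x}{H}{x}$ is a purely classical minimisation of the cost function $E(x)\coloneqq\mel{x}{H}{x}$ over bit strings, and both halves of the dichotomy reduce to statements about $E$.

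For the \cl{P} side I would assume every $M\in\mathcal{M}$ is $1$-local, i.e. each decomposes as $A\otimes\mathbb{I}+\mathbb{I}\otimes B$ for single-qubit diagonal $A,B$. Then every term contributes a value depending on a single bit, so the cost separates completely as $E(x)=\sum_{i=1}^n f_i(x_i)$. Minimising each $f_i$ independently costs $O(1)$ per qubit, giving a polynomial-time greedy algorithm and placing the decision problem in \cl{P}.

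For the \clw{NP}{complete} side I would first record containment: a ground configuration $x\in\B^n$ is a polynomial certificate and $E(x)$ is a sum of $m=\poly{n}$ diagonal entries, so verifying $E(x)\le a$ runs in polynomial time; thus the promise problem lies in \cl{NP}. For hardness, fix a genuinely $2$-local $M^{*}\in\mathcal{M}$ and expand it in the $\{\mathbb{I},Z\}$ basis as $M^{*}=c\,\mathbb{I}\otimes\mathbb{I}+\alpha\,Z\otimes\mathbb{I}+\beta\,\mathbb{I}\otimes Z+J\,Z\otimes Z$; being genuinely $2$-local is exactly the condition $J\neq0$, since $J=0$ gives $M^{*}=A\otimes\mathbb{I}+\mathbb{I}\otimes B$. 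Placing $M^{*}$ on a pair $(u,v)$ with weight $w=1/J$ realises the antiferromagnetic coupling $+Z_uZ_v$ (the sign of $J$ is irrelevant precisely because the weights are signed), together with residual fields $(\alpha/J)Z_u+(\beta/J)Z_v$. Doing this on every edge of a graph encodes the \sc{Max-Cut} ground-state problem, which is \clw{NP}{complete}.

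The main obstacle is that $M^{*}$ emits the fields $\alpha,\beta$ bundled with every coupling, so a bare $Z_uZ_v$ term is never directly available. I would neutralise them with ancilla gadgets that themselves use $M^{*}$: a single large-weight placement of $M^{*}$ on an auxiliary pair $(r,r')$ has a unique minimiser (its own $\alpha,\beta,J$ terms break the global bit-flip symmetry), pinning $r$ to a definite value $z_r\in\{\pm1\}$; a further $M^{*}$ bond from each logical qubit $v$ to $r$ then contributes a tunable field $w'(\alpha+Jz_r)Z_v$, and since $J\neq0$ at least one choice of pin makes $\alpha+Jz_r\neq0$, so a suitable signed $w'$ annihilates $v$'s accumulated field, leaving the linear part a global constant that drops out of the \sc{yes}/\sc{no} comparison. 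The remaining work is bookkeeping: the dominating pin weights are respected in every ground configuration, $E(x)$ is integer-valued up to the common scale, and after renormalising to $\norm{H}\le1$ the \sc{Max-Cut} threshold yields a promise gap $b-a\ge 1/\poly{n}$, completing the reduction and hence the \clw{NP}{complete} classification.
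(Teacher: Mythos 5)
First, a point of reference: the paper never proves this lemma --- it is imported verbatim from Ref.~\cite{CM16}, so there is no in-paper proof to compare against. Your reconstruction follows what is essentially the standard route for the diagonal case: diagonality makes the problem a classical minimisation of $E(x)=\mel{x}{H}{x}$; full $1$-locality makes $E$ separable and hence polynomial-time minimisable; and a genuinely $2$-local term ($J\neq 0$ in the $\{\mathbb{I},Z\}$ expansion) is used to encode antiferromagnetic Ising couplings, i.e.\ \sc{Max-Cut}. The \cl{P} half and the \cl{NP}-containment argument are correct as written, and the overall hardness strategy (signed weights, field cancellation via a pinned ancilla) is sound in outline.

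The genuine gap is in the pinning gadget. You claim a single large-weight placement of $M^{*}$ on $(r,r')$ ``has a unique minimiser'' because its $\alpha,\beta,J$ terms ``break the global bit-flip symmetry''. Writing $f(z_r,z_{r'})=\alpha z_r+\beta z_{r'}+J z_r z_{r'}$ over $\{\pm1\}^2$, this is false in general: for $\alpha=\beta=0$ both $\pm f$ have doubly degenerate minima (no symmetry is broken), and for $\alpha=\beta=J>0$ the minimum of $+f$ is \emph{threefold} degenerate, so the pin only works with the negative weight. The correct statement, which needs a short enumeration of the four values of $f$, is that both $+f$ and $-f$ have degenerate minima if and only if $\alpha=\beta=0$; in that exceptional case no fields exist and no gadget is needed, and otherwise at least one sign of the pin weight yields a unique minimiser. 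Your proof makes no such case split, so as written the pinning step fails on legitimate inputs (e.g.\ $M^{*}=Z\otimes Z$, or $M^{*}=Z\otimes\mathbb{I}+\mathbb{I}\otimes Z+Z\otimes Z$ with a positive pin weight). Relatedly, the phrase ``at least one choice of pin makes $\alpha+Jz_r\neq0$'' treats $z_r$ as if it were yours to select; the gadget dictates $z_r$, and if the dictated value satisfies $\alpha+Jz_r=0$ you must instead bond the logical qubit into the \emph{other} slot of $M^{*}$ (field coefficient $\beta+Jz_r$) and verify that both coefficients cannot vanish simultaneously --- true, but again only by the same case analysis you omit. All of these holes are repairable, so the strategy survives, but the proof as written is not complete.
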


It is well-known that the ground state of classical local Hamiltonians is a product state --- a computational basis state.
We denote the bit string corresponding to the ground state as $\g$ and the corresponding product state as $\ket{\phi_0}$.

\GuidedDiagHamiltonian*

\begin{proof}
    There exists a semi-classical subset state $\ket{S}$ such that 
    \begin{equation*}
        \abs{\braket{S}{\phi_0}}^2 \coloneqq \frac{\ind[\phi_0 \in S]}{\abs{S}},
    \end{equation*}
    where $\ind[\phi_0 \in S]$ is $1$ if $\g\in S$ and $0$ otherwise.
    Clearly $\abs{\braket{S}{\phi_0}}^2 \in [0,1]$.
    We can rule out the two edge cases, namely of $\abs{\braket{S}{\phi_0}}^2 = 0$ the problem remains \clw{NP}{complete} and if $\abs{\braket{S}{\phi_0}}^2 = 1$ the problem is trivially in \cl{P}.
    We therefore restrict our attention to the case $\abs{\braket{S}{\phi_0}}^2 \geq 1/\poly{n}$.
    The crucial point to realise that when the fidelity is larger than zero, it is guaranteed by orthogonality that $\g\in S$.
    Thus, to find the ground state of the Hamiltonian it suffices to solve the minimisation task:
    \begin{equation*}
        \min_{x\in S} \mel{x}{H}{x} = \min_{x\in S} \sum_j \mel{x}{h_j}{x},
    \end{equation*}
    where $h_j$ is a diagonal matrix.
    This can be done classically in polynomial time.
\end{proof}

\section{Stoquastic Arthur Definition}\label{app:StoqP}
Here we present a formal definition of the class \cl{StoqP}, relating to the underlying verification class for Arthur in \cl{StoqMA}.
We referred to the naive definition of \cl{StoqP} as simply \cl{StoqMA} without a proof --- this is sometimes how \cl{QMA} and \cl{BQP} can be related by definition.

To formally define \cl{StoqP}, we require an adaptation of classically reversible quantum circuits.
Specifically, we must modify the output measurement to be in the $X$-basis, rather than the computational basis.
Therefore, let $V_n^{(+)}$ denote the circuits as in~\cref{def:CRQVC} with the output measurement in the $X$-basis, otherwise the definition remains the same.
We call this set of circuits a \emph{classically reversible quantum circuit with $X$-basis output measurement}, or CRQVC\textsuperscript{(+)} for short.

\begin{definition}[\cl{StoqP}]
    Let $L = (L_{\sc{yes}}, L_{\sc{no}})$ be a promise problem.
    A problem $L$ belongs to the class \cl{StoqP} if there exists a polynomial-time generated CRQVC\textsuperscript{(+)} family $\mathcal{V}^{(+)} = \{V_n^{(+)} : n \in \mathbb{N}\}$, where each circuit $V_n^{(+)}$ acts on $n + m + p$ input bits and produces one output bit, such that:
    \begin{itemize}
        \item If $x \in L_{\sc{yes}}$, then $\Pr[V_{|x|}^{(+)}(x) = {\tt +}] \geq \alpha(|x|)$,
        \item If $x \in L_{\sc{no}}$, then $\Pr[V_{|x|}^{(+)}(x) = {\tt +}] \leq \beta(|x|)$,
    \end{itemize}
    where $\frac{1}{2} \leq \beta(|x|) < \alpha(|x|) \leq 1$, satisfying $\alpha(|x|) - \beta(|x|) = \Omega(1/\poly{|x|})$.
\end{definition}

\subsection{Justification of the Definition of StoqP}
To justify this choice of complexity class, consider the following definitions and theorem, inspired by Ref.~\cite{marriott2003non}.

\begin{definition}
    Let $\hat{\exists}$ be the quantum existential quantifier, and $\cl{K}$ denote a complexity class.
    A promise problem $L = (L_{\sc{yes}}, L_{\sc{no}})$ belongs to $\hat{\exists}\cdot\cl{K}$ if and only if:
    \begin{itemize}
        \item If $x \in L_{\sc{yes}} ~ \implies ~ \exists \ket{\psi} \in (\mathbb{C}^2)^{\otimes w(|x|)} ~ : (x, \ket{\psi}) \in J_{\sc{yes}}$,
        \item If $x \in L_{\sc{no}} ~ \implies ~ \forall \ket{\psi} \in (\mathbb{C}^2)^{\otimes w(|x|)} ~ : (x, \ket{\psi}) \in J_{\sc{no}}$,
    \end{itemize}
    for some promise problem $J = (J_{\sc{yes}}, J_{\sc{no}}) \in \cl{K}$ and polynomially bounded function $w$.
\end{definition}

\begin{definition}[\cl{StoqP}-consistency]
Let $J = (J_{\sc{yes}}, J_{\sc{no}})$ be a promise problem over inputs of the form $(x,\ket{\psi})$, where $x \in \B^*$ and $\ket{\psi}$ is a quantum state on $w = \poly{|x|}$ qubits.
A uniform family of stoquastic circuits $\mathcal{V}^{(+)} = \{V_n^{(+)}\}$ is said to be \emph{\cl{StoqP}-consistent with $J$} if for all $(x,\ket{\psi})$ in the promise of $J$ with $|x| = n$:
\begin{itemize}
    \item If $(x,\ket{\psi}) \in J_{\sc{yes}}$, then $\Pr[V_n^{(+)}(x,\ket{\psi}) = \mathtt{+}] \geq \alpha(n)$,
    \item If $(x,\ket{\psi}) \in J_{\sc{no}}$, then $\Pr[V_n^{(+)}(x,\ket{\psi}) = \mathtt{+}] \leq \beta(n)$,
\end{itemize}
where $1/2 \le \beta(n) < \alpha(n) \le 1$ and $\alpha(n) - \beta(n) \ge 1/\poly{n}$.
\end{definition}

\begin{theorem}\label{thm:stoqma-equals-stoqp}
    \begin{equation*}
        \hat{\exists}\cdot\cl{StoqP} = \cl{StoqMA}.
    \end{equation*}
\end{theorem}

\begin{proof}
    We prove both inclusions separately.

    \medskip
    \noindent\textbf{($\hat{\exists}\cdot\cl{StoqP} \subseteq \cl{StoqMA}$).}
    Let $L = (L_{\sc{yes}}, L_{\sc{no}}) \in \hat{\exists}\cdot\cl{StoqP}$. 
    Then there exists a promise problem $J = (J_{\sc{yes}}, J_{\sc{no}})$ over inputs $(x,\ket{\psi})$ and a uniform family of stoquastic circuits $\{V_n^{(+)}\}$ that is \cl{StoqP}-consistent with $J$, such that:
    \begin{align*}
        x \in L_{\sc{yes}} &\implies \exists \ket{\psi} : (x,\ket{\psi}) \in J_{\sc{yes}} \implies \exists \ket{\psi} : \Pr[V_n^{(+)}(x,\ket{\psi}) = \mathtt{+}] \geq \alpha(n), \\
        x \in L_{\sc{no}} &\implies \forall \ket{\psi} : (x,\ket{\psi}) \in J_{\sc{no}} \implies \forall \ket{\psi} : \Pr[V_n^{(+)}(x,\ket{\psi}) = \mathtt{+}] \leq \beta(n).
    \end{align*}
    A $\cl{StoqMA}$ verifier simulates $V_n^{(+)}$ on input $(x,\ket{\psi})$, where $\ket{\psi}$ is provided by Merlin. 
    That is, on input $x$, Merlin sends a witness $\ket{\psi}$ and Arthur runs $V_n^{(+)}$ on $(x,\ket{\psi})$ and answers accordingly.
    If $x \in L_{\sc{yes}}$, then there exists a witness $\ket{\psi}$ that causes Arthur to accept with probability at least $\alpha(n)$, and if $x \in L_{\sc{no}}$, then for all witnesses $\ket{\psi}$, Arthur accepts with probability at most $\beta(n)$.
    Therefore $L \in \cl{StoqMA}$.

    \medskip
    \noindent\textbf{($\cl{StoqMA} \subseteq \hat{\exists}\cdot\cl{StoqP}$).}
    Let $L = (L_{\sc{yes}}, L_{\sc{no}}) \in \cl{StoqMA}$. 
    Then there exists a stoquastic verification circuit $S$ (from the family $\{S_n\}$) for $L$.
    Let $S_{\text{Arthur}}$ be Arthur's component of the verification circuit $S$ that acts on the input $x$ and the witness $\ket{\psi}$, and produces an output bit.
    It follows that for every instance $x$:
    \begin{align*}
        x \in L_{\sc{yes}} &\implies \exists \ket{\psi} \text{ with } \Pr[S_{\text{Arthur}}(x,\ket{\psi}) = \mathtt{+}] \ge \alpha(n), \\
        x \in L_{\sc{no}} &\implies \forall \ket{\psi},\ \Pr[S_{\text{Arthur}}(x,\ket{\psi}) = \mathtt{+}] \le \beta(n).
    \end{align*}
    The circuit is \cl{StoqP}-consistent with a promise problem $J = (J_{\sc{yes}}, J_{\sc{no}})$.
    Hence, we can see that $J \in \cl{StoqP}$ and that $L \in \hat{\exists}\cdot\cl{StoqP}$.
\end{proof}

\subsection{Relation to BPP\textsubscript{q}}
Then $\cl{StoqP} = \cl{StoqP}(\alpha, \beta)$.
Note that $\alpha$ is necessarily bounded away from $\frac{1}{2}$ however $\beta$ is not.
Our arguments prove that 
\begin{equation*}
    \cl{StoqP}(\alpha,\beta) \subseteq \clsb{BPP}{q}(2\alpha - 1, 2\beta - 1).
\end{equation*}
To show the containment it suffices to prove that a \clsb{BPP}{q} machine can simulate a \cl{StoqP} machine for a given instance.

Fix $x \in \B^n$, let $V = V_{|x|}^{(+)}$ and $\ket{\phi_x} = \ket{x, 0^m, +^p}$ for brevity. 
Since every gate in $\{I, X,\Gate{Cnot},\Gate{Toffoli}\}$ is self-inverse and classically reversible, the circuit $V^\dagger X_1 V$ is again a polynomial-size circuit over the same gate set. 
Consider the operator $W \coloneqq V^\dagger X_1 V$. 
From the fact $\ketbra{+}{+}_1 = \frac{1}{2}(I + X_1)$, we observe that 
\begin{equation*}
    \mel{\phi_x}{W}{\phi_x} = 2\Pr[V_{|x|}^{(+)}(x) = {\tt +}] - 1.
\end{equation*}
It therefore suffices to realise the quantity $\mel{\phi_x}{W}{\phi_x}$ as the acceptance probability of a CRQVC (cref{def:CRQVC}).

Since $W$ is a circuit over classically reversible gates, it acts as a permutation on the computational basis. Write $a \coloneqq (x,0^m) \in \{0,1\}^{n+m}$. 
Expanding the $\ket{+}$ ancillae in the computational basis gives $\ket{\phi_x} = 2^{-p/2}\sum_{z\in\{0,1\}^p}\ket{a,z}$. 
Hence
\begin{equation*}
    \mel{\phi_x}{W}{\phi_x} = 2^{-p}\sum_{z,z' \in \{0,1\}^p} \mel{a,z'}{W}{a,z}.
\end{equation*}
Because $W$ permutes computational basis states, for each $z$ there is a unique basis string $f(a,z)$ such that $W\ket{a,z}=\ket{f(a,z)}$. 
Writing $f(a,z)=(y_z,w_z)$ with $y_z \in \{0,1\}^{n+m}$ and $w_z \in \{0,1\}^p$, the matrix element $\mel{a,z'}{W}{a,z}$ is equal to $1$ if and only if $(y_z,w_z)=(a,z')$, and is otherwise $0$. 
Therefore
\begin{equation*}
    \mel{\phi_x}{W}{\phi_x} = 2^{-p}\bigl|\{z \in \{0,1\}^p : y_z = a\}\bigr|.
\end{equation*}
This is exactly the probability that, after applying $W$ to the state $\ket{a}\ket{+^p}$ and measuring in the computational basis, the first $n+m$ output qubits are equal to $a=(x,0^m)$.

We now construct a \clsb{BPP}{q} machine that simulates this computation.
On input $x$, the circuit $M_{|x|}$ first appends $n$ additional ancilla qubits initialised to $\ket{0^n}$ and copies the classical input string $x$ into them using \Gate{Cnot} gates from the original input register. 
This is valid because the input is a computational basis string. 
The circuit then applies $W$ to the original $n+m+p$ qubits. 
After this, it reversibly compares the first $n$ qubits of the transformed register with the stored copy of $x$, and also checks that the next $m$ qubits are all equal to $0$. 
Using standard reversible classical computation over the gate set $\{I, X,\Gate{Cnot},\Gate{Toffoli}\}$, these checks may be combined into a single output qubit which is set to ${\tt 1}$ exactly when the first $n+m$ qubits equal $x,0^m$. 
Thus $M_{|x|}$ is a valid CRQVC, and by the argument above its acceptance probability is precisely
\begin{equation*}
    \Pr[M_{|x|}(x) = {\tt 1}] = \mel{\phi_x}{W}{\phi_x} = 2\Pr[V_{|x|}^{(+)}(x) = {\tt +}] - 1.
\end{equation*}
It follows that the \clsb{BPP}{q} machine accepts with probability at least $2\alpha(|x|)-1$ if $x \in L_{\sc{yes}}$, and at most $2\beta(|x|)-1$ if $x \in L_{\sc{no}}$. This completes the proof.

\section{Deterministic Classical Approximation Algorithms for the Guided Local Hamiltonian Problem}\label{app:classical-algorithms}
In the mild setting where the promise gap and overlap are constant, the \sc{Guided Local Hamiltonian} problem can be solved efficiently using classical algorithmic techniques.
Specifically, we demonstrate that the problem can be solved in polynomial time via the cluster expansion framework~\cite{mann2024algorithmic}.
The purpose of this section is to provide a straightforward proof that in the case of a constant-depth-preparable guiding state; the techniques of Ref.~\cite{mann2024algorithmic} can be applied to solve the problem efficiently.
Moreover, this proof simplifies the argument presented in Ref.~\cite{zhang2024dequantized}.
However, we are unable to extend this argument to the more general case of considering semi-classical subset states without enforcing a sequence of strong assumptions on the structure of the underlying Hamiltonian (see \cref{rmk:extension})

\subsection{Setup}
Let $G = (V,E)$ be a multihypergraph with maximum degree $\Delta$ and maximum rank $k$.
Let $H_G = \sum_{e \in E} \psi_e$ be a local Hamiltonian defined on $G$, where each $\psi_e$ is a positive semidefinite operator supported on the vertices in $e$.
We assume that $\norm{\psi_e} \leq 1$ for all $e \in E$; besides, this can be achieved by rescaling the Hamiltonian.
It is easy to see that $\norm{H_G} \leq \norm{G} \leq \poly{n}$, where $\norm{G} = \abs{E}$ is the number of edges in $G$ and $n = \abs{V}$ is the number of vertices.
Let the eigensystem of $H_G$ be $\{(\lambda_j, \ket{\phi_j})\}_{j=0}^{2^n-1}$, where $\ket{\phi_0}$ and $\lambda_0$ are the ground state and ground-state energy of $H_G$ respectively.
Denote the spectral gap of $H_G$ by $\gamma = \lambda_1 - \lambda_0 > 0$.
The partition function of $H_G$ at inverse temperature $\beta$ is defined as $Z_G(\beta) = \Tr[e^{-\beta H_G}]$.

Suppose $\ket{\xi} = U \ket{0^n}$ is a guiding state for $H_G$, where $U$ is a constant-depth quantum circuit with geometrically local gates (ensuring the transformed Hamiltonian remains local with constant degree).
Assume that the overlap between $\ket{\xi}$ and the ground state $\ket{\phi_0}$ is at least $\delta > 0$.
We define $p_j = \abs{\braket{\xi}{\phi_j}}^2$ for $j \in \{0,1,\dots,2^n-1\}$, then we have $p_0 \geq \delta$ and $\sum_{j=0}^{2^n-1} p_j = 1$.

The expectation value $E_G(\beta) = \mel{\xi}{\e^{-\beta H_G}}{\xi}$ is what we aim to approximate efficiently.
Let $\pi_0 = \bigotimes_{v\in V} \ketbra{0}{0}_v$ be the projector onto the all-zero state, and consider the projected partition function $Y^0_G(\beta) = \Tr[\pi_0 \e^{-\beta H_G}]$.
A straightforward calculation shows that $Y^0_{\tilde{G}}(\beta) = E_G(\beta)$, where $\tilde{G} = (V, \tilde{E})$ is a multihypergraph with the same vertex set as $G$ and edge set defined by the transformed Hamiltonian $H_{\tilde{G}} = U^\dagger H_G U$.
Simple light-cone arguments show that $H_{\tilde{G}}$ is also a local Hamiltonian defined on $\tilde{G}$ provided that $U$ is a constant-depth quantum circuit with geometrically local gates.
Specifically, if $G$ has maximum degree $\Delta$ and maximum rank $k$, then $\tilde{G}$ has maximum degree $\tilde{\Delta} = O(\Delta)$ and maximum rank $\tilde{k} = O(k)$ when $U$ has constant depth and bounded-range gates.
We can interpret $Y^0_{\tilde{G}}(\beta)$ as a thermal expectation value of the operator $\pi_0$ with respect to the Hamiltonian $H_{\tilde{G}}$ at inverse temperature $\beta$.
Standard results on classical approximation algorithms for partition functions of local Hamiltonians~\cite{mann2024algorithmic} can be applied to approximate $Y^0_{\tilde{G}}(\beta)$ efficiently, and hence $E_G(\beta)$.

\subsection{Cluster Expansion and Spectral Bounds}
\begin{theorem}[\cite{mann2024algorithmic}]\label{thm:mann-minko}
    Fix $\tilde{\Delta},\tilde{k} \in \mathbb{N}_{\geq 2}$.
    Let $\tilde{G} = (V,\tilde{E})$ be a multihypergraph with maximum degree $\tilde{\Delta}$ and maximum hyperedge rank $\tilde{k}$, and suppose $\beta$ is a complex number such that 
    \begin{equation*}
        \abs{\beta} \leq \frac{1}{\e^4 \tilde{\Delta} \binom{\tilde{k}}{2}}.
    \end{equation*}
    Then the cluster expansion for $\log Y_{\tilde{G}}^{0}(\beta)$ converges absolutely, $Y_{\tilde{G}}^{0}(\beta) \neq 0$, and there exists a classical algorithm that computes $\hat{z}$ satisfying
    \begin{equation*}
        \abs{\hat{z} - \log Y_{\tilde{G}}^{0}(\beta)} \leq \epsilon
    \end{equation*}
    in time polynomial in $n$ and $1/\epsilon$.
\end{theorem}

\begin{proposition}\label{prop:spectral-bounds}
    The spectral decomposition of $H_G$ implies that 
    \begin{equation}\label{eq:expectation-value-bound}
        \delta \e^{-\beta \lambda_0} \le E_G(\beta) \le (1+\delta/4) \e^{-\beta \lambda_0},
    \end{equation}
    provided $\e^{-\beta \gamma} \leq \delta/4$.
\end{proposition}
\begin{proof}
    From the spectral decomposition:
    \begin{equation*}
        E_G(\beta) = p_0 \e^{-\beta \lambda_0} + \sum_{j=1}^{2^n-1} p_j \e^{-\beta \lambda_j}.
    \end{equation*}
    The lower bound follows from $p_0 \geq \delta$. For the upper bound, using $\lambda_j \geq \lambda_0 + \gamma$ for $j \geq 1$:
    \begin{equation*}
        \sum_{j=1}^{2^n-1} p_j \e^{-\beta \lambda_j} 
        \leq (1-p_0) \e^{-\beta(\lambda_0 + \gamma)}
        \leq \e^{-\beta \lambda_0} \e^{-\beta \gamma}
        \leq \e^{-\beta \lambda_0} \cdot \frac{\delta}{4}.
    \end{equation*}
    Therefore $E_G(\beta) \leq (1 + \delta/4)\e^{-\beta \lambda_0}$.
\end{proof}

Let $\beta_{\text{CE}} = (\e^4 \tilde{\Delta} \binom{\tilde{k}}{2})^{-1}$ be a sufficient convergence radius of the cluster expansion and $\beta_{\text{gap}} = \gamma^{-1}\log(4\delta^{-1})$ be the inverse temperature at which the contribution from the excited states is sufficiently suppressed.
To apply both results, we require the existence of some $\beta$ satisfying $\beta \in [\beta_{\text{gap}}, \beta_{\text{CE}}]$, which is possible when
\begin{equation}\label{eq:gap-requirement}
    \gamma \geq \frac{\log(4\delta^{-1})}{\beta_{\text{CE}}} = \Omega(\tilde{\Delta} \tilde{k}^2 \log(\delta^{-1})).
\end{equation}
For constant $\delta$, $\tilde{\Delta}$, and $\tilde{k}$, this requires a spectral gap $\gamma = \Omega(1)$.

\begin{remark}\label{rem:ce-output}
    The cluster expansion (Theorem~\ref{thm:mann-minko}) provides a direct additive approximation to $\log Y_{\tilde{G}}^{0}(\beta)$.
    Specifically, for any $\epsilon > 0$, we can compute a value $\hat{z}$ such that $\abs{\hat{z} - \log Y_{\tilde{G}}^{0}(\beta)} \leq \epsilon$ in time polynomial in $n$ and $1/\epsilon$.
    Since $\log Y_{\tilde{G}}^{0}(\beta) = \log E_G(\beta)$, we have
    \begin{equation*}
        \log E_G(\beta) \in [\hat{z} - \epsilon, \hat{z} + \epsilon].
    \end{equation*}
    Taking logarithms on both sides of the bounds in~\cref{eq:expectation-value-bound}, we find
    \begin{equation}\label{eq:lambda-interval}
        \beta \lambda_0 \in [\log \delta - \hat{z} - \epsilon, \log(1+\delta/4) - \hat{z} + \epsilon].
    \end{equation}
    The width of this interval is $\log((1+\delta/4)/\delta) + 2\epsilon$.
\end{remark}

To decide the \sc{Guided Local Hamiltonian} problem, we use the estimator $\hat{\lambda} = -\frac{\hat{z}}{\beta}$.
From~\cref{eq:lambda-interval}, this satisfies
\begin{equation}\label{eq:estimator-bounds}
    \lambda_0 \in \left[
        \hat{\lambda} + \frac{\log \delta - \epsilon}{\beta}, 
        \hat{\lambda} + \frac{\log(1+\delta/4) + \epsilon}{\beta}
        \right].
\end{equation}

For the decision algorithm to work correctly with promise gap $b - a$, we need the uncertainty interval in~\cref{eq:estimator-bounds} to be smaller than the promise gap.
It is convenient to choose $\epsilon = \beta(b-a)/4$, which gives a total uncertainty of $\frac{b-a}{2} + \frac{\log((1+\delta/4)/\delta)}{\beta}$.
This is at most $b-a$ provided
\begin{equation}\label{eq:gap-condition}
    b - a \geq \frac{2}{\beta}\log\left(\frac{1+\delta/4}{\delta}\right).
\end{equation}

\subsection{Main Result}
\begin{theorem}\label{thm:glh-classical}
    Fix constants $\Delta, k$ and let $G = (V,E)$ be a multihypergraph with maximum degree $\Delta$ and maximum hyperedge rank $k$.
    Let $H_G$ be a local Hamiltonian defined on $G$ with spectral gap $\gamma>0$ and $\ket{\xi} = U \ket{0^n}$ be a guiding state for $H_G$ with overlap at least $\delta > 0$.
    Assume that $U$ is a constant-depth quantum circuit with geometrically local gates such that the transformed Hamiltonian $H_{\tilde{G}} = U^\dagger H_G U$ is a local Hamiltonian defined on a multihypergraph $\tilde{G} = (V, \tilde{E})$ with maximum degree $\tilde{\Delta} = O(\Delta)$ and maximum hyperedge rank $\tilde{k} = O(k)$.
    Take $a<b$ to be the thresholds in the \sc{Guided Local Hamiltonian} problem.
    Suppose the parameters satisfy:
    \begin{enumerate}
        \item $\delta = \Omega(1)$ (constant overlap),
        \item $\gamma = \Omega(\tilde{\Delta} \tilde{k}^2)$ (spectral gap condition from~\cref{eq:gap-requirement}),
        \item $b - a = \Omega(1)$ (constant promise gap satisfying~\cref{eq:gap-condition}).
    \end{enumerate}
    Then there exists $\beta = \Theta(1)$ such that 
    \begin{equation*}
        \gamma^{-1}\log(4\delta^{-1}) \leq \beta \leq (\e^4 \tilde{\Delta} \binom{\tilde{k}}{2})^{-1},
    \end{equation*}
    and there is a deterministic algorithm that decides the \sc{Guided Local Hamiltonian} problem for $H_G$ and $\ket{\xi}$ in time polynomial in $n$.
\end{theorem}

\begin{remark}\label{rmk:extension}
    The extension of this technique to the more general case of semi-classical subset states is non-trivial and would require strong assumptions on the structure of the underlying Hamiltonian.
    Specifically, when $\ket{\xi} = \sum_{x \in S} c_x \ket{x}$ is a semi-classical subset state, we aim to approximate $E_G(\beta) = \sum_{x,y \in S} Y^{x,y}_G(\beta)$, where $Y^{x,y}_G(\beta) = \Tr[\pi_{x,y} \e^{-\beta H_G}]$ and $\pi_{x,y} = \ketbra{x}{y}$.
    Each $Y^{x,y}_G(\beta)$ can be interpreted as a thermal expectation value of the operator $\pi_{x,y}$ with respect to the Hamiltonian $H_G$ at inverse temperature $\beta$.
    However, the operator $\pi_{x,y}$ is not positive semidefinite and there is no guarantee that the quantity $Y^{x,y}_G(\beta)$ is non-zero, which is a crucial requirement for the cluster expansion to be applicable.
    We expect that under a decay condition for basis states with large Hamming distance, the cluster expansion could be applied to approximate $Y^{x,y}_G(\beta)$ efficiently, but this is beyond the scope of this work.
\end{remark}
\end{document}